\newtheorem{problem}{Problem}
\definecolor{darkgreen}{rgb}{0,0.6,0}
\newcommand{\kibitz}[2]{\ifnum\Comments=1{\color{#1}{#2}}\fi}
\newcommand{\fcal}{\ensuremath{\mathcal{F}}\xspace}
\newcommand{\kcr}{{\sc $k$-charge removal}\xspace}
\newcommand{\akcr}{{\sc $k_{\geq}$-charge removal}\xspace}
\newcommand{\minkcr}{{\sc minimal-$k_{\geq}$-charge removal}\xspace}
\begin{document}

\setcounter{page}{181}
\publyear{2021}
\papernumber{2096}
\volume{184}
\issue{3}

   \finalVersionForARXIV

\title{On the Hardness of Energy Minimisation \\for Crystal Structure Prediction\footnote{A preliminary conference version of this work appeared in Proceedings of the 46th International Conference on Current Trends in Theory and Practice of Computer Science,
SOFSEM 2020 \cite{Adamson2020}.}}

\author{Duncan Adamson\thanks{Address  for correspondence: Department of Computer Science,
               University of Liverpool, Liverpool, L69 3BX, UK.
               \newline \newline
          \vspace*{-6mm}{\scriptsize{Received October 2021; \ revised December 2021.}}}
\\
Department of Computer Science\\
University of Liverpool, Liverpool,  UK \\
duncan.adamson@liverpool.ac.uk
\and
Argyrios Deligkas\\
Department of Computer Science\\
Royal Holloway University of London,  London, UK\\
Argyrios.Deligkas@rhul.ac.uk
\and
Vladimir Gusev\\
Leverhulme Research Centre for Functional \\
Materials Design\\
University of Liverpool, Liverpool, UK\\
Vladimir.Gusev@liverpool.ac.uk
\and
Igor Potapov\\
Department of Computer Science\\
 University of Liverpool, Liverpool, UK\\
 potapov@liverpool.ac.uk
}

\maketitle

\runninghead{D. Adamson et al.}{On the Hardness of Energy Minimisation for Crystal Structure Prediction}

\vspace*{-5mm}
\begin{abstract}
Crystal Structure Prediction (\textsc{csp}) is
one of the central and most challenging problems in materials science and computational chemistry.
In \textsc{csp}, the goal is to find a configuration of ions in 3D space that yields the lowest
potential energy.
Finding an efficient procedure to solve this complex optimisation question is a well known open problem. 
Due to the exponentially large search space, the problem has been referred in several materials-science papers as ``NP-Hard and very challenging'' without a formal proof.
This paper fills a gap in the literature providing the first set of formally proven
NP-Hardness results for a variant of \textsc{csp} with various realistic constraints.
In particular, we focus on the problem of \emph{removal}: the goal is to find a
substructure with minimal potential energy, by removing a subset of the ions.
Our main contributions are NP-Hardness results for the \textsc{csp} \emph{removal} problem, new embeddings of combinatorial graph problems into geometrical settings, and a more systematic exploration of the energy function
to reveal the complexity of \textsc{csp}.
In a wider context, our results contribute to the
analysis of computational problems for weighted graphs embedded into the three-dimensional Euclidean space.\vspace*{-1mm}
\end{abstract}


\begin{keywords}
Energy Minimisation, Graph theory, Euclidean Graphs, NP-Hard Problems, Crystal Structure Prediction
\end{keywords}

\section{Introduction}

One of the central and most challenging problems in materials science and computational chemistry is to predict the structure of a crystal, given the set of ions composing it.
At a high level, the goal there is to find the ``best'' configuration structure of ions in a three-dimensional box.
More specifically, the input of the problem consists of a chemical formula, and a function that maps configurations of the ions -- that agree with the chemical formula -- in the Euclidean three-dimensional space to energy.
The objective is to find a structure that minimises the average energy per ion, since this structure is more likely to correspond to a stable material \cite{LYAKHOV20101623}.
This problem, termed \emph{Crystal Structure Prediction} (\textsc{csp}), has remained open due to the complexity of solving it optimally \cite{woodley2008crystal} and the combinatorial explosion following a brute-force approach.
There are many previous approaches to this problem, largely based on heuristic techniques \cite{csp-local,LYAKHOV20101623, doi:10.1002/1521-3765(20020916)8:18<4102::AID-CHEM4102>3.0.CO;2-3,oganov2006crystal,WANG20122063}, however these lack the ability to guarantee optimality within any factor and moreover they are computationally very demanding.

In the most generic formulation of \textsc{csp} there are many degrees of freedom,
due to both the physical constraints of the setting, such as the interaction between types of ions, and the optimisation parameters of the model such as the number of ions to place, and their positions.
Furthermore, real crystals are based on periodic tessellations of 3D space with \emph{unit cells}, parallelepiped periods composed of ions, whose size and shape may also be changed.
Even when the size with respect to both the size of the unit cell, and the number of ions are restricted, the search space for this problem is still exponential.
Due to this, \textsc{csp} has, incorrectly, been referred to in several computational-chemistry papers as ``NP-Hard and very challenging'' \cite{Oganov2018}.
However, from the computational-theory viewpoint the argument that the search must be done in a set of exponential size does not imply NP-Hardness.

Two results closely related to the NP-Hardness of energy minimisation can be found in \cite{Barahona_1982} and in \cite{Wille_1985}.
In \cite{Barahona_1982}, within the context of the Ising model, the authors show NP-Hardness for placing $\pm{1}$ charged vertices on a graph taking into account only interactions between connected vertices.
The reduction works on a grid, where each vertex has degree at most 6, making the interaction very local.
While this is close to \textsc{csp}, the limitation to only very local interactions within the context of the electrostatic charge reduces the relevance of this result to \textsc{csp}.
In \cite{Wille_1985}, it was shown that the problem of placing ions for some given positions is in NP. However, the reduction goes only one way, and thus shows only containment in NP, and does not imply the NP-Hardness of the problem.
While both of these results are closely related, neither of them provides a satisfying answer to the 
\textsc{csp} problem.

In our work, we consider several special variants of \textsc{csp} and provide a few alternative reasons for the hardness of closely related problems.
We take inspiration from hard combinatorial problems in graph theory and embed several NP-Hard problems on graphs into complete weighted graphs in Euclidean space.
These can be seen as optimisation problems for weighted geometric graphs with a non-linear objective function.
We focus on the problem of \emph{removal};
which can be seen as a combinatorial variant of \textsc{csp}. Here, the input is a configuration of  ions,
and the goal is to remove a subset of them such that the \emph{interaction} energy -- the pairwise energy between each pair of ions --  among the remaining atoms is minimised. This can be seen as a problem of removing vertices of a graph which results in a subgraph satisfying some specific property.
These types of problems have been intensively studied in the combinatorial graph theory.
In \cite{doi:10.1137/0208049}, it was shown that is is NP-Complete to decide if a removal of $k$ vertices from a graph leaves a forest, or planar graph.
In \cite{doi:10.1137/0210022} and  \cite{Yannakakis78node-and} this was extended to further properties showing NP-Completeness for bipartite graphs and for non-trivial hereditary properties.

An alternative view of the removal problem is as variant of \textsc{csp}, where the available positions of the ions correspond to points on a discrete grid/lattice of 3D Euclidean space.
For example, we can find the optimal structure for an instance of \textsc{csp}  as follows. Firstly, we can unrealistically place several ions from which we want to build a new structure at every position of the discrete space (or even at unrealistically close distance to each other at every position).
Then we need to find a set of ions that should be removed to minimise the energy function.
Due to the nature of the energy function, when the goal is to minimise the potential energy, the overlapping ions must be removed and thus we get a realistic structure.
In our variant of \emph{removal} problem, for which we show NP-Hardness results, an initial configuration of ions on this grid, from which we remove ions, is part of the input and has only vacant positions or positions with a single ions in the discrete three-dimensional-Euclidean space.

In this paper the interaction is restricted with respect to the energy function to a single unit cell only.
The primary reason for this restriction is that the electrostatic potential over the infinite series when expressed naively converges conditionally \cite{Faber2015}.
While there are approaches that mitigate this issue, they are difficult to analyse theoretically.
An alternative approach, used here, is to consider a sufficiently large part of a crystal as a unit cell
such that the local interactions within the cell provide a good approximation of the total energy.
By showing that this problem is hard even in this simple case,
we provide a strong evidence that the general case is hard as well.

\paragraph{\textbf{Our contributions.}}
Our main goal is to initiate the study of
\textsc{csp} through the lens of Theoretical Computer Science and explore a more systematic way of studying the energy function that could reveal the computational complexity of \textsc{csp}.
Towards this, we give the first correct NP-Hardness results with more realistic constraints for \textsc{csp} such as those inspired by \cite{collins2017accelerated}.
In addition, we provide new embeddings of combinatorial problems on graphs in geometrical settings.
Our results can be seen as part of a more general problem of removing vertices from a weighted graph embedded into 3D Euclidean space.
This type of results are more difficult to obtain compared to other metric spaces.

There are two main technical challenges in relation to the graphs that we consider.
Firstly, they are {\em complete}. Secondly, they are ``Euclidean'', i.e., the edges are weighted proportional to the Euclidean distance between their vertices.
In this restricted setting, many classical NP-Hard problems are much harder to embed.
There are only a few NP-hardness results about low dimensional Euclidean graphs \cite{ITA_2011__45_3_331_0, CLARK1990165}.
The main difficulty is to satisfy the triangle inequality in constant dimensions.
This is further complicated in our setting by the additional weighting caused by a non-linear energy function acting as a weight on the edges.

\medskip
We consider three versions of the \emph{removal} problem.
In every version, the input consists of a graph $G$ embedded into 3D Euclidean space and an ``energy function'' that maps every subgraph of $G$ to an energy.
Further each vertex in $G$ is labelled with an ion species and it is weighted according to the charge of the species.
Similarly each edge $e$ in $G$ is weighted using the energy function on the subgraph consisting of only $e$.
The objective for each version is to remove a \emph{neutral} subset of vertices - a set of vertices for which the sum of charges is zero - such that the energy of the remaining subgraph is~minimised.~Informally speaking the versions differ by constraints imposed on the set of vertices we can remove.
\begin{itemize}
\item {\bf $k$-Charge Removal.}
Here, we must remove a neutral set of vertices such that the sum of vertices with positive charges is equal to $k$.
In Corollary \ref{col:np_for_sol_kcr} we note that it is NP-Hard to determine if there is a solution to this problem when the charges on the vertices are unbounded.
\item {\bf $k_\geq$-Charge Removal}.
Here, a neutral set of vertices must be removed such that the sum of vertices with a positive charge is greater than or equal to $k$.
\item {\bf Minimal-$k_{\geq}$-Charge Removal.}
Here, the neutral set of vertices must also be minimal (informally, no vertices can be removed from the set while keeping it both neutral and the sum of positive charges greater than or equal to $k$).
\end{itemize}

\begin{table}[!b]
\vspace*{-3mm}
 \caption{Summary of our results and their corresponding settings.} \label{tab:summary}
    \scalebox{0.95}{
    \begin{tabular}{|p{2cm}|p{3.8cm}|p{8.3cm}|}
        \hline
        Theorem & Summary & Setting \\
        \hline
        Theorem \ref{thm:kcr-np} & NP-Completeness by reduction from the clique problem.& All problems, under any energy function in \fcal, vertices of charge $\pm c$ for a given $c$ and an unbounded number of ion species.\\
        \hline
        Theorem \ref{thm:kcr_np_any_charges} & NP-Completeness by extension of Theorem \ref{thm:kcr-np}. & All problems, under any energy function in \fcal, any bounded set of charges and an unbounded number of ion species.\\
        \hline
        Theorem \ref{thm:non_approximable} & Non-approximability in Polynomial time for any factor of $n^{1 - \epsilon}$, for $\epsilon > 0$ & \akcr, for any energy function in \fcal.\\
        \hline
        Theorem \ref{thm:kcr_max_weight_clique} & Reduction to max-weight-$k$-clique. & $k$-Charge Removal or minimal-$k_{\geq}$-Charge Removal under any computable energy function, vertices of charge $\pm c$ for a given $c$, and a unbounded number of ion species.\\
        \hline
        Theorem \ref{thm:kcr_2_np} & NP-Completeness by reduction from independent set on penny graphs. & All problems, under the Buckingham-Coulomb potential energy function, vertices of charge $\pm 1$, and two species of ion.\\
        \hline
        Theorem \ref{thm:kcr_coulomb_knapsack} & NP-Completeness by reduction from the knapsack problem. & Minimal-$k_{\geq}$-Charge Removal and $k_{\geq}$-Charge Removal, under the Coulomb potential energy function, unbounded number of vertex weights  and unbounded number of ion species.\\
        \hline
    \end{tabular} }
\end{table}

We also consider a variety of settings for these problems, varying the energy function for the crystal, restrictions on the charges of the vertices, and restrictions on the number of ion species.
We summarise our results in Table \ref{tab:summary}.
Regarding energy functions, we primarily consider the Buckingham-Coulomb potential function, which is commonly used in computational chemistry, and the more general class of \emph{controllable} energy functions, which we denote by \fcal, that includes the Buckingham-Coulomb potential function.
The class \fcal is formally defined in Section \ref{sec:prelim}.
We show in Proposition \ref{prop:bc_in_fcal} that Buckingham-Coulomb belongs to the class \fcal.
One other energy function we consider is the Coulomb potential, which is used to calculate the electrostatic potential.
We show that depending on the energy function used, and the restrictions on the ion species and vertex charges, we are able to reduce several different combinatorial problems to our problems.
Since the initial publication of these results Theorem \ref{thm:kcr_2_np} has been used to show the hardness of the related problem of \emph{Multimarginal Optimal Transport for Density Functional Theory} \cite{Altschuler2020,DO_ALTSCHULER2021}.

The remainder of this work is organised as follows: in Section \ref{sec:prelim} we discuss the preliminaries of these problems, providing relevant notation and definitions.
In Section \ref{sec:kcr} we present our results for the general case of the problems, proving NP-Hardness with Theorems \ref{thm:kcr-np} and \ref{thm:kcr_np_any_charges} for energy functions in \fcal and an unbounded number of ion species.
We also consider some natural restrictions to this problem.
In Section \ref{sec:kcr_ind_set} we consider the restriction of having only two species of ions under the \emph{Buckingham-Coulomb} potential as energy function. We show that the problem remains NP-Hard under these restrictions (Theorem \ref{thm:kcr_2_np}).
In Section \ref{sec:kcr_kanpsack} we consider the restriction to only the \emph{Coulomb} potential, this time with no restrictions on the number of species of ions or the charges of the ions.
In Theorem \ref{thm:kcr_coulomb_knapsack} we show that under these restrictions the problem remains NP-Hard.

\section{Preliminaries}
\label{sec:prelim}

\noindent
{\bf Unit Cell.}
A {\em crystal} is a solid material defined by an infinitely repeating parallelepiped period, called \emph{unit cell}.
A unit cell is a parallelepiped region of three dimensional space containing ions at fixed positions.
Each unit cell contains a set of $n$ \emph{ions} within the unit.
Each ion, $i$, has a \emph{specie}, e.g. Ti or Sr, and a non-zero charge $q_i$.
The specie for an ion $i$ is denoted $S(i)$.
In every crystal the unit cell is charge neutral, i.e., $\sum\limits_{1 \leq i \leq n} q_i = 0$.
An \emph{arrangement} of ions in a unit cell defines a position for every ion in the unit cell, i.e. the positions within $\mathbb{R}^3$.

\medskip \noindent
{\bf Energy of a Crystal.}
When two crystal structures of the same composition are assessed for stability,
a common method is to compare the sum of all pairwise interactions, with a more negative potential energy being preferable.
This energy is calculated using \emph{force fields}, determined
by the species of ions.
The force field parameters and distance between the ions are used by a given pairwise energy function $U$.
In general, energy is defined via series as a crystal is considered to have infinite size.

For notation, the pairwise interaction between two ions $i$ and $j$ with respect to the energy function $U$ is $U(i,j)$, denoted $U_{ij}$ when it is clear from context.
The value of $U_{ij}$ is defined by the force field of the ions and the Euclidean distance between them, which is included as one of the parameters.
The total potential energy for an arrangement of $n$ ions in the graph $G$ is given by $U(G) = \sum\limits_{1 \leq i,j \leq n, i \neq j} U_{ij}$.

This paper considers a general class of energy functions, called the \emph{controllable} potential functions, denoted by \fcal.
All functions in \fcal are required to be computable in polynomial time for any input.
Intuitively, for every $f \in \fcal$ there exists a set of force field parameters that counteract the effect of the distance parameter $r$.

\begin{definition}
A function $f: \mathbb{R}^q \mapsto \mathbb{R}$ belongs to \fcal if and only if for any given $a \in \mathbb{R}$ and any fixed $r \in \mathbb{R}^+$ there exists a vector of $q - 1$ parameters $\vec{x} \in \mathbb{R}^{q - 1}$ such that $f(\vec{x}, r) = a$.
\end{definition}

\noindent
One of the most common functions for crystal structure prediction is the \emph{Buckingham-Coulomb} potential  \cite{buckingham1938classical}, which is the sum of the Buckingham potential and the Coulomb potential.
The Coulomb potential for a pair of ions $i,j$ is defined as $U^C(i,j) =\frac{q_i q_j}{r_{ij}}$, where $r_{ij}$ is the Euclidean distance between the ions.
The Buckingham potential, $U^B(i,j)$, for a pair of ions $i,j$ is defined by four parameters: the distance between the ions, $r_{ij}$, and the three force field parameters, $A_{i,j}$, $B_{i,j}$, $C_{i,j} \in \mathbb{R}$, defined by the species of the ions.
Note that all three parameters are positive values.
The energy is calculated as $U^B(i,j) = \frac{A_{i,j}}{e^{B_{i,j} \cdot r_{ij}}} - \frac{C_{i,j}}{r_{ij}^6}$.
The {\bf Buckingham-Coulomb potential} is given by:
\begin{equation*}
    U^{BC}(i,j) = U^B(i,j) + U^C(i,j) = \frac{A_{i,j}}{e^{B_{i,j} \cdot r_{ij}}} - \frac{C_{i,j}}{r_{ij}^6} + \frac{q_i q_j}{r_{ij}}.
\end{equation*}

\begin{proposition}
    $U^{BC}: \mathbb{R}^4 \mapsto \mathbb{R}$ belongs to the class \fcal.
\label{prop:bc_in_fcal}
\end{proposition}

\begin{proof}
To show that $U^{BC}$ belongs to $\mathcal{F}$, it is sufficient to provide a constructive method to determine the force field parameters such that for any value $a \in \mathbb{R}$ and a pair of ions $i,j$ the energy $U^{BC}(i,j) = a$.
Let $i$ and $j$ be at a distance of $r_{ij}$ with arbitrary charges $q_i$ and $q_j$, the parameters may be set so that the potential at a distance of $r_{ij}$ is $a$.
We set $B_{i,j}=0$ and the values of $A_{i,j}$ and $C_{i,j}$ as follows depending on whether $a$ is positive or not. If $a > 0$, then we set:
\[
A_{i,j} = \begin{cases}
        a & \text{if } q_i q_j > 0;\\
        a + \frac{|q_i q_j|}{r_{ij}} & \text{otherwise}.
    \end{cases}\\ \qquad \text{and} \qquad
C_{i,j} = \begin{cases}
        q_i q_j r_{ij}^5 & \text{if } q_i q_j > 0;\\
        0 & \text{otherwise}.
    \end{cases}
\]
\noindent
If $a \leq 0$, then we set:
\[
A_{i,j} = \begin{cases}
        0 & \text{if } q_i q_j > 0;\\
        \frac{|q_i q_j|}{r_{ij}} & \text{otherwise}.
    \end{cases}\\ \qquad \text{and} \qquad
C_{i,j} = \begin{cases}
        |a| r_{ij}^6 + q_i q_j r_{ij}^5 & \text{if } q_i q_j > 0;\\
        |a| r_{ij}^6 & \text{otherwise}.
    \end{cases}
\]

\noindent
In this case of the Buckingham-Coulomb potential, the equation becomes
\begin{equation*}
    U^{BC}(i,j) = A_{i,j} - \frac{C_{i,j}}{r_{i,j}^6} + \frac{q_i q_j}{r_{ij}}.
\end{equation*}
\noindent
The Coulomb potential, $\frac{q_i q_j}{r_{ij}}$, is cancelled either by adding $q_i q_j r_{ij}^5$ to $C_{i,j}$, if $q_i q_j > 0$, or $\frac{|q_i q_j|}{r_{ij}}$ to $A_{i,j}$ in the case $q_i q_j \leq 0$.
In the first case the energy added by the Coulomb potential is $\frac{|q_i q_j|}{r_{ij}}$, which is cancelled by the addition of $q_i q_j r_{ij}^5$ when multiplied by the $\frac{-1}{r_{ij}^6}$ term applied to $C_{i,j}$.
Otherwise the Coulomb energy is $\frac{-|q_i q_j|}{r_{ij}}$, which is cancelled out by the relevant addition from $A_{i,j}$.
With the Coulomb energy removed, $a$ is either left as part of the $A_{ij}$ term, if $a > 0$, or part of the $C_{ij}$ term otherwise.
\end{proof}

\noindent
\textbf{Crystals as geometric graphs.}
The unit cell of a crystal may be thought of as a geometric graph $G = (V,E)$ in $\mathbb{R}^3$.
Recall that each ion corresponds to a charged point in $\mathbb{R}^3$.
In the context of a graph each ion represents a weighted vertex in $\mathbb{R}^3$ at the same position as the ion, giving a total of $n$ vertices.
\eject
\noindent Let the vertex $v_i$ correspond to ion $i$ and $wt(v_i)$ correspond to the charge of $q_i$, i.e. $wt(v_i) = q_i$.
For notation, $V^+ \subset V$ denotes the set of vertices with a positive charge, and $V^-$ the set of vertices with a negative charge.

\medskip
Between each pair of vertices there is an edge, weighted by the pairwise interaction of the corresponding ions $U_{ij}$.
Note that from its definition $U_{ij}$ is determined in part by the length of the edge, which is drawn as a straight line in the space.
The energy of a crystal graph $G = (V, E)$ is computed as $U = \sum\limits_{(v_i,v_j) \in E} U_{ij}$.

\begin{definition}
\label{def:crystal_graph}
A crystal graph is a complete geometric graph $G=(V,E)$ equipped  with a weight function $f: (v_1,v_2 \in V, r \in \mathbb{R}) \mapsto \mathbb{R}$ such that:
\begin{itemize}
    \item every $v \in V$ is associated to a non-zero integer $wt(v)$;
    \item $\sum_v wt(v)=0$;
    \item every edge $(u,v)$ is associated with a weight given by $f(u,v,dist(u,v))$, denoted $wt(u,v)$.
\end{itemize}
\end{definition}

In the remainder of this work crystals are described in terms of their physical structure, where it makes sense to be considering the ions, and as a graph otherwise.
Further, we assume that a crystal graph contains $n$ vertices unless otherwise stated, equivalently we assume that a crystal contains $n$ ions unless otherwise stated.
One important concept for crystals is the idea of \emph{neutrality}.
Informally a crystal is neutral if the sum of the charges of every ion in the unit cell is 0.

\begin{definition}
A set of vertices $R \subset V$ is \textbf{neutral} if $\sum\limits_{v_i \in R} wt(v_i) =  0$.
\label{def:neutral}
\end{definition}\vspace*{-3mm}

\paragraph*{\textbf{The \emph{k-Charge Removal} Problem.}}

The $k$-Charge Removal problem, henceforth \kcr, takes as input a crystal graph $G$ corresponding to a ``dense'' initial arrangement of ions, with the goal of removing some vertices in order to minimise the energy of the new subgraph $G' \subset G$.
It is assumed that the initial graph is \emph{neutral} by Definition \ref{def:neutral}.
As $G'$ must also be neutral, any set of vertices which is removed must therefore be neutral.
A natural number $k$ of charges to remove is chosen, as defined in Definitions \ref{def:k_charges} and \ref{def:removal}.
In practical applications, the value of $k$ may be chosen either using intuition from chemistry, or by exhaustively checking each value of $k$.

\begin{definition}
For any set $S \subseteq V,$ $S^+$ denotes the set of positively charged vertices in $S$, and $S^-$ the set of negatively charged vertices.
\end{definition}

\begin{definition}
A \textbf{set of k-charges} $R$ in a crystal graph $(V,E)$ is a neutral set of vertices, where $R \subseteq V$ and $\left\lvert \sum\limits_{v \in R^+} wt(v)\right\rvert =  \left\lvert \sum\limits_{v \in R^-} wt(v) \right\rvert = k$.
\label{def:k_charges}
\end{definition}

\noindent
Informally, a set of $k$-charges is a set of vertices with a net charge of 0, where the magnitudes of the sums of all positively charged vertices is $k$, as is the sum of all negatively charged vertices.

\begin{definition}
The {\bf removal} operation of a set of vertices $R$ from a graph $G = (V,E)$ returns the graph $G' = (V',E')$, where $V' = V \setminus R$ and $E'$ is the set of edges in $E$ with no endpoint in $R$.
\label{def:removal}
\end{definition}

\noindent
In other words, a removal of $R$ from $G$ returns the graph $G'$ that is the complement of the graph induced by $R$.

\begin{problem}
$k$-Charge Removal (\kcr)
\label{prob:kcr_def}
\end{problem}

\noindent
\begin{tabular}{l l}
    \emph{Instance:} & A crystal graph $G$, with edges weighted by a given common energy\\
    & function $U$, a natural number $k$ and a goal energy $g \in \mathbb{R}$.\\
    \emph{Question:} & Does there exist a set of $k$-Charges $R \subset V$
    such that removing $R$ from $G$\\
    & returns a graph $G'$, where $U(G') \leq g$?
\end{tabular}

\paragraph*{\textbf{The \emph{$k_{\geq}$-Charge Removal} Problem.}}

One variation of \kcr is the \emph{$k_{\geq}$-Charge Removal problem}, denoted \akcr.
This problem takes the same input as in \kcr, however rather than looking to remove a set of exactly $k$-charges, it is instead sufficient to remove a neutral set of at least $k$ positive and negative charges.
In this generalisation, vertices of total weight more than $k$ may be removed, provided the cell remains neutral.
Note that any removal of exactly $k$ is also acceptable for this generalisation.

\begin{definition}
A set of \textbf{$k_{\geq}$-charges} $R$ from a crystal graph $(V,E)$ is a neutral subset, where $R \subseteq V$ and $\sum\limits_{v_i \in R^+} wt(v_i) \geq k$.
\label{def:k_charge}
\end{definition}

\begin{problem}
$k_{\geq}$-Charge Removal (\akcr)
\end{problem}

\begin{tabular}{l l}
    \emph{Instance:} & A crystal graph $G$, with edges weighted by a given common energy\\
    & function $U$, a natural number $k$, and a goal energy $g \in \mathbb{R}$.\\
    \emph{Question:} & Does there exist a set of $k_{\geq}$-Charges $R \subset V$ such that removing $R$ \\
    & from $G$ returns a graph $G'$, where $U(G') \leq g$?
\end{tabular}

\begin{proposition}
A solution to \kcr or \akcr can be verified in polynomial time.
\label{prop:kcr_in_np}
\end{proposition}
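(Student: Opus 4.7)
The plan is to exhibit a polynomial-time verifier that, given the instance $(G,k,g)$ and a candidate removal set $R \subseteq V$, decides whether $R$ witnesses a solution of value at most $g$. The verification proceeds in three stages: (i) check that $R$ has the correct charge profile, (ii) construct the removal $G' = \{V',E'\}$, and (iii) compute the potential energy of $G'$ and compare it with $g$.

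For stage (i), I would iterate once over $R$ to compute $\sum_{v_i \in R^+} wt(v_i)$ and $\sum_{v_j \in R^-} wt(v_j)$. Neutrality of $R$ is checked by verifying these two sums have equal magnitude; then for \kcr I would verify that their common magnitude equals $k$, and for \akcr that it is at least $k$. Each of these is a single pass through $|R| \leq n$ integer weights, hence takes $O(n)$ arithmetic operations on numbers whose bit-length is bounded by the input size. For stage (ii), the removal graph is obtained by scanning the $O(n^2)$ edges of $G$ and discarding any edge with an endpoint in $R$; this is straightforward $O(n^2)$ work.

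For stage (iii), I would evaluate $\sum_{\{v_i,v_j\} \in E'} U_{ij}$ by summing over at most $\binom{n}{2}$ terms. Each individual term $U_{ij}$ depends on the species, charges, and Euclidean distance of the two ions. By the standing assumption on the energy functions considered in the paper, namely that every $U \in \fcal$ is computable in polynomial time (and the explicit Coulomb and Buckingham--Coulomb potentials considered later are clearly polynomial-time computable up to the required precision from their closed-form expressions), each $U_{ij}$ can be produced in time polynomial in the input size. Summing $O(n^2)$ such values and comparing to $g$ then completes the verification in overall polynomial time.

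The only genuinely delicate point is stage (iii): one must be precise about what "computable in polynomial time" means for real-valued energy functions, since $r_{ij} = \sqrt{\cdot}$ and quantities like $e^{-Br_{ij}}$ appear. However, this is exactly the reason the paper's definition of \fcal explicitly bakes in polynomial-time computability, so in our model of computation we may take each $U_{ij}$ as a polynomial-time primitive. Given that, the obstacle dissolves and stages (i)--(iii) together yield a polynomial-time verifier, establishing that both \kcr and \akcr lie in \textbf{NP}.
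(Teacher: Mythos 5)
Your verifier is the same one the paper uses: a single pass over $R$ to check neutrality and the (at-least-)$k$-charge condition, followed by an $O(|V|^2)$ summation of the surviving edge weights against the goal $g$; your extra care about the polynomial-time computability of each $U_{ij}$ (guaranteed by the definition of \fcal) is a welcome but inessential refinement of the paper's terser argument. The proposal is correct and matches the paper's proof.
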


\begin{proof}
A solution to \kcr contains the set of charges $R$ that are removed. 
This can be verified as a set of $k_{\geq}$-charges by simply summing up the positive and negative weights, checking that the set is neutral and that $\left\lvert\sum\limits_{v_i \in R^+} wt(v_i)\right\rvert = k$ for \kcr or $\left\lvert\sum\limits_{v_i \in R^+} wt(v_i)\right\rvert \geq k$ for \akcr.
The time complexity is of the order of $O(|R|)$.
Similarly the sum of the edges in the original graph $G$ that do not have an endpoint in $R$ can be checked against the goal value $g$.
This is done in $O(|V|^2)$ time, as the graph is complete.
As no step takes more than $O(|V|^2)$ time, a solution to either \kcr or \akcr can be verified in polynomial time.
Hence \kcr and \akcr fall into the class of NP problems.
\end{proof}

\paragraph*{\textbf{The \emph{Minimal-$k_{\geq}$-Charge Removal} Problem.}}
An alternative variation of \akcr is the \emph{minimal-$k_{\geq}$-Charge Removal} problem, denoted \minkcr.
This also serves as a generalisation of \kcr, where the goal is to get close to a set of $k$-charges, accepting that it may not be possible to reach the exact value.
In this problem a \emph{minimal} set of $k_{\geq}$-charges is removed.

\begin{definition}
A set $R$ of $k_{\geq}$-charges is \textbf{minimal} if there exists no subset $R' \subset R$ such that $\left\lvert \sum\limits_{v_i \in R'^+} wt(v_i) \right\rvert \geq k$ and $\left\lvert \sum\limits_{v_i \in R'^+} wt(v_i) \right\rvert = \left\lvert\sum\limits_{v_j \in R'^-} wt(v_j)\right\rvert$.
\label{def:min_k_charge}
\end{definition}

\noindent
Informally, Definition \ref{def:min_k_charge} means that there is no way of getting closer to a set of $k$-charges from the set, without having fewer than $k$ charges.
It follows that for a given crystal graph, there may be multiple minimal $k_{\geq}$-charge sets for a given $k$.
A removal of $k_{\geq}$-charges is minimal if the set of $k_{\geq}$-charges is minimal.
It may be noted that a set of $k$-charges is always a minimal set of $k_{\geq}$-charges.

\begin{problem}
Minimal-$k_{\geq}$-Charge Removal (\minkcr)
\label{prob:minkcr_def}
\end{problem}

\begin{tabular}{l l}
    \emph{Instance:} & A crystal graph $G$, with edges weighted by a given common energy\\
    & function $U$, a natural number $k$, and a goal energy $g \in \mathbb{R}$.\\
    \emph{Question:} & Does there exist a minimal set of $k_{\geq}$-charges $R \subset V$ such that\\
    & removing $R$ from $G$ returns a graph $G'$, where $U(G') \leq g$?
\end{tabular}

\begin{proposition}
It is NP-Hard to verify if a set of $k_{\geq}$-charges is minimal when no bounds are given on the charges of the vertices.
\label{prop:kcr_not_np}
\end{proposition}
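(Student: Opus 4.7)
The plan is to reduce from the \textsc{Partition} problem, which is NP-Complete and whose standard formulation places no bound on the magnitude of the input integers. Given a \textsc{Partition} instance $a_1, \ldots, a_n$ of positive integers with $\sum_i a_i = 2S$, I would construct a candidate removal set $R$ consisting of $n$ positively weighted vertices $p_1, \ldots, p_n$ with $wt(p_i) = a_i$, together with two negatively weighted vertices $n_1, n_2$ each of weight $-S$. The vertices are placed at arbitrary distinct positions in $\mathbb{R}^3$, since the minimality condition of Definition \ref{def:min_k_charge} depends only on the weights and neutrality of subsets, not on the geometry. Setting $k = S$, the set $R$ has positive-weight sum $2S \geq S$ and is neutral, so it is a valid instance of an at-least-$k$-charge set on which to ask the minimality verification question.

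The core claim is that $R$ fails to be minimal if and only if the \textsc{Partition} instance is a YES-instance. For the easy direction, any subset $I \subseteq \{1,\ldots,n\}$ with $\sum_{i \in I} a_i = S$ produces the strict subset $R' = \{p_i : i \in I\} \cup \{n_1\}$, which is neutral with positive-weight sum exactly $S = k$, certifying non-minimality. For the converse, suppose $R' \subsetneq R$ is neutral with positive-weight sum at least $S$. I would case-split on the number of negative vertices in $R'$: if neither $n_1$ nor $n_2$ belongs to $R'$ then neutrality forces the positive-weight sum of $R'$ to be $0$, contradicting $S > 0$; if both negative vertices belong to $R'$ then neutrality forces $\sum_{p_i \in R'} a_i = 2S$, so every positive vertex is included and $R' = R$, contradicting strict inclusion. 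The only remaining case is that exactly one of $n_1, n_2$ lies in $R'$, and then neutrality yields $\sum_{p_i \in R'} a_i = S$, a \textsc{Partition} solution.

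The step I expect to require the most care is this case analysis in the converse direction: one must argue that every ``trivial'' proper subset is excluded, so that any witness of non-minimality must encode a genuine \textsc{Partition} solution. Using two identical negative vertices of weight $-S$ is precisely what makes the two-negatives case collapse into $R' = R$; a single negative vertex of weight $-2S$ would not close off this degeneracy. Since the magnitudes $a_i$ and $S$ are encoded in binary in the standard \textsc{Partition} input, the reduction runs in polynomial time in the input size and essentially requires that no polynomial bound is imposed on the charges, matching the hypothesis of the proposition. A polynomial-time procedure for verifying minimality could, by negating its output, decide \textsc{Partition} in polynomial time, establishing the claimed hardness.
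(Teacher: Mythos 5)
Your proposal is correct and takes essentially the same approach as the paper: the paper reduces from \textsc{subset-sum} using the identical gadget of item-weighted positive vertices together with two negative vertices splitting the total (charges $-k$ and $-(\sum_i a_i - k)$), of which your \textsc{Partition}-based construction with two vertices of weight $-S$ is the symmetric special case. Your case analysis on how many negative vertices a witnessing strict subset can contain mirrors the paper's argument that any such subset must use exactly one negative vertex and hence encode a solution to the number-theoretic instance.
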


\begin{proof}
This is shown by a reduction from the \emph{subset-sum} problem.
In the subset-sum problem there is a set of values $S$, and a goal $k$.
The task is to choose some subset $S' \subseteq S$ such that $\sum\limits_{i \in S'} i = k$.
Note that this problem remains NP-complete in the case the input is only positive integers.

\medskip
Given an instance of subset sum $I = (S,k)$, a crystal graph is created as follows.
For each integer $i \in S$ a new vertex with a charge of $i$ is created, note these correspond to the set $V^+$.
Two further ions are created, the first having a charge of $-k$ and the second having a charge of $-\left(\left(\sum\limits_{v_i \in V^+} wt(v_i)\right) - k\right)$, these correspond to $V^-$.
The value $k'$ is chosen as the greater of $k$ and $\left(\sum\limits_{v_i \in V^+} wt(v_i)\right) - k$.

Given $I$, we claim that the only minimal $k'$-Charge Removal, $R$, from $S = (V,E)$ is $R = V$ if and only if there is no solution to $I$.
To disprove that $R$ is minimal there must be some subset $R' \subset R$ that is also a set of $k_{\geq}$-charges.
As $k'$ charges must be removed, any such $V'^-$ must only contain the vertex in $V^-$ with a charge of $-k'$.
Therefore if this claim is false, there must be a set $R'^+ \subseteq R^+$ such that $\sum\limits_{v_i \in R'^+} wt(v_i) = k'$.
If there is such a $R'$ then there must also be a solution to the subset sum instance as either $R'^+$ or $R^+ \setminus R'^+$.
This is shown as if $k' = k$, the values in $R'^+$ must sum to $k$, satisfying $I$.
Conversely, if there is $k'$-Charge removal $R \subset V$ then following the above arguments, there must be a solution to $I$.

In the other direction, if there is a solution to $I$ then trivially there must be exist such a $R'^+$ that would make $R'$ non-minimal.
Similarly if there is no valid solution to $I$ then the only minimal set of $k$-charges is the complete set of ions.
Therefore it may not be determined if a solution is minimal in polynomial time.
Subsequently as a minimal set of charges for \akcr is required, a solution can not be verified in polynomial time unless $P = NP$, therefore it is not in NP in the general case.
\end{proof}

\begin{corollary}
It is NP-Hard to determine if an instance of \kcr has a valid solution in the case there are no bounds on the charges of the vertices.
\label{col:np_for_sol_kcr}
\end{corollary}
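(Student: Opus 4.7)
The plan is to derive this as a direct consequence of the subset-sum embedding used in the proof of Proposition~\ref{prop:kcr_not_np}. Given an instance $(S, t)$ of subset-sum with positive integer inputs, I would construct a crystal graph consisting of one positively weighted vertex $v_s$ of weight $s$ for each $s \in S$, together with two negatively weighted vertices of weights $-t$ and $-(T-t)$, where $T = \sum_{s \in S} s$. The resulting graph is charge-neutral by design, and because the corollary concerns only the existence of a valid solution --- that is, a set of $k$-charges in the sense of Definition~\ref{def:k_charges} --- the geometric positions of the vertices in $\mathbb{R}^3$ and the choice of energy function $U$ play no role in the argument. Note that the charges $-t$ and $-(T-t)$ can be arbitrarily large with $\lvert S\rvert$, so the reduction lives within the unbounded-charge regime required by the statement.

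Setting the \kcr target to $k = t$, I would then verify that a set of $k$-charges exists in the constructed graph if and only if the subset-sum instance is a yes-instance. The negative side of any candidate removal must sum to $-t$, and the only negative weights available are $-t$ and $-(T-t)$; outside a couple of degenerate parameter values the only feasible choice is therefore to take the $-t$ vertex alone, whereupon the positive side must realise a subset of $S$ summing exactly to $t$ --- which is precisely the subset-sum question on $(S, t)$. Both directions of the equivalence are then immediate.

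The main obstacle I anticipate is a clean dispatch of the degenerate cases. If $T < t$ both problems are trivially no-instances, and if $T = t$ both are trivial yes-instances (incidentally, the vertex of weight $-(T-t)$ would be zero in this case, which is disallowed, but short-circuiting handles it). The only genuinely worrying case is $T = 2t$, in which $\{-(T-t)\}$ also sums to $-t$; but even there the positive side of any valid removal must still realise the subset-sum target, so the equivalence is preserved. With these small cases handled by outputting a fixed trivial \kcr instance, the reduction is polynomial in the input size, and NP-Hardness of subset-sum transfers directly to the problem of deciding whether an arbitrary instance of \kcr admits a valid solution when no bound is placed on the charges.
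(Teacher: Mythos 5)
Your proposal is correct and follows essentially the same route as the paper: the corollary is obtained from the subset-sum gadget of Proposition~\ref{prop:kcr_not_np} (one positive vertex per input integer plus negative vertices of weight $-t$ and $-(T-t)$), with the observation that a set of $k$-charges exists iff the positive side can realise the target sum. Your only deviations are cosmetic --- you take the target $k=t$ directly rather than $\max(t, T-t)$, and you dispatch the degenerate cases ($T<t$, $T=t$, $T=2t$) explicitly, which the paper glosses over.
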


\begin{proof}
It follows from the arguments of Proposition \ref{prop:kcr_not_np} that an instance of \kcr may be constructed for a subset sum instance $I = (S,k)$ such that it is only satisfiable if the subset sum instance is.
\end{proof}

\begin{lemma}
A set of $k$-Charges may be verified as minimal in polynomial time for charges bounded by a polynomial size.
\label{lem:minkcr_in_np}
\end{lemma}

\begin{proof}
In the case when the charges of the vertices are bounded, a solution to the subset sum may be found in polynomial time, for example, relative to either the upper limit on the weights due to Pisinger \cite{pisinger1999linear}, or the number of distinct weights and the goal values due to Axiotis and Tzamos \cite{axiotis2018capacitated}.
Using these algorithms a set of $k$-Charges $R$ can be verified as minimal.
This is done by, for every value $1 \leq t \leq \sum\limits_{v_i \in R^+} wt(v_i) - k$ checking if there is a subset of charges $R'^+ \subseteq R^+$ and $R'^- \subseteq R^-$ such that $t =  \sum\limits_{v_i \in R'^+} wt(v_i)  = \left\lvert \sum\limits_{v_i \in R'^-} wt(v_i)\right\rvert$.
If there exists such a solution for any $t$ then $R$ is not minimal.

The claimed energy may also be verified by checking the sum of pairwise interactions relative to $U$, which may be trivially done in polynomial time due to the definition of $U$.
Therefore under these restrictions \akcr is in NP.
\end{proof}

\section{NP-Hardness for an unbounded number of ion species}
\label{sec:kcr}

\noindent
This section focuses on the class of potential functions \fcal.
It is assumed that the energy function for all cases is an arbitrary function in \fcal for which the parameters required by the ions to result in the energy from their pairwise interaction to be any arbitrary $a$ are known.
NP-completeness for \kcr as well as for the generalisations to \minkcr and \akcr is shown when there are bounds on value of the charges (either quantity of charges, or the maximum value).
Further, these problems are shown to be APX-Hard for bounded values of charges.
It may be noted that in the case the charges are not bounded, \minkcr remains NP-Hard, however as it is not in NP it is not NP-complete.
Along with the hardness results we provide a polynomial time reduction from both \kcr and \minkcr to max-weight-$k$-clique, under the restriction that all vertices have a charges of $\pm c$ for some non zero $c \in \mathbb{Z}$.

\begin{theorem}
\kcr, \minkcr and \akcr are NP-Complete for any energy function in $\mathcal{F}$ for vertices with charges of $\pm c$, for any natural number~$c$.
\label{thm:kcr-np}
\end{theorem}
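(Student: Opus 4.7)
The plan is to reduce from \textsc{Clique}. Given an instance $(H, k)$ with $H = (V_H, E_H)$ on $n$ vertices, I will build a crystal graph $G$, a removal target $k' = (n-k)c$, and an energy threshold $g$ such that $H$ has a clique of size $k$ if and only if the \kcr (respectively \minkcr, \akcr) instance admits a removal of energy at most $g$.

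For the construction, introduce for each $v \in V_H$ one positive ion $p_v$ of charge $+c$ and one negative ion $q_v$ of charge $-c$, assigning each of the $2n$ ions its own distinct species so that every \emph{pair} of ions has an independently tunable force field. Place all ions at fixed distinct positions in $\mathbb{R}^3$ (for instance, spread the $p_v$ along one line and the $q_v$ along a parallel line, with all pairwise distances positive and finite). Using the controllability of the chosen $f \in \fcal$, set the pairwise interactions to $U(p_u, p_v) = a$ when $\{u,v\} \in E_H$, $U(p_u, p_v) = b$ when $\{u,v\} \notin E_H$, $U(q_u, q_v) = 0$ for all $u, v$, and $U(p_u, q_v) = -D$ for all $u, v$, where $0 < a < b$ and $D$ is a large constant chosen later. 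Finally set $g = a\binom{k}{2} - Dk^2$.

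Because every charge equals $\pm c$, any neutral removal whose positive-charge magnitude equals $k' = (n-k)c$ must remove exactly $n-k$ positive ions and $n-k$ negative ions. Letting $S$ be the set of vertices whose positive ions \emph{remain}, the residual energy is $a|E_H[S]| + b(\binom{k}{2} - |E_H[S]|) - Dk^2$. Since $b > a$, this is minimised precisely when $|E_H[S]|$ is maximum, and equals $g$ if and only if $|E_H[S]| = \binom{k}{2}$, i.e.\ when $S$ is a clique of size $k$ in $H$. Thus the \kcr decision problem answers yes on this instance iff $H$ has a $k$-clique, which together with Proposition~\ref{prop:kcr_in_np} yields NP-completeness of \kcr.

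The extension to \minkcr and \akcr is where the main care is needed. For \minkcr, any removal of $m$ positives and $m$ negatives with $m > n-k$ contains a proper subset that is already an at-least-$k'$-charge set (take any $n-k$ of each sign), so minimal removals must have $m = n-k$ and the analysis above applies verbatim. For \akcr I instead choose $D$ large enough (say $D > bn$) so that adding one extra $(p,q)$-pair to the removed set eliminates $\Theta(Dn)$ worth of attractive pos-neg interactions from the residual while saving only $O(bn)$ from dropped pos-pos repulsions; hence the optimum is attained at the smallest admissible $m = n-k$, reducing to \kcr once more. The chief obstacle to watch is verifying that the controllability condition of $\fcal$ really allows all required pairwise energies to be realised simultaneously at prescribed positions; this is what forces the use of $2n$ distinct ion species (one per ion) so that the parameters governing each pairwise interaction can be chosen independently to hit the prescribed value at the fixed distance for that pair.
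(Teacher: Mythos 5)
Your proposal is correct and follows essentially the same route as the paper: a reduction from \textsc{clique} in which each vertex of $H$ becomes a $(+c,-c)$ pair of ions, every ion receives its own species so that the controllability of $f \in \fcal$ lets each pairwise energy be prescribed independently at fixed positions, and the energy threshold is attained exactly when the surviving positive ions span a $k$-clique. The only differences are in the constants: you use finite penalties $a < b$ and a large attractive positive--negative interaction $-D$ where the paper uses $-1$ for edges and $\infty$ for non-edges (your finite version is arguably cleaner, since $\fcal$ only guarantees realisable \emph{real} values), and your explicit $D > bn$ computation for \akcr is a more careful rendering of the paper's one-line observation that removing more than $k'$ charges can only increase the residual energy.
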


\begin{proof}
\kcr and \akcr are in NP by Proposition \ref{prop:kcr_in_np}, and as the vertex charges are bounded, \minkcr is in NP by Lemma \ref{lem:minkcr_in_np}.
Hardness is established via a reduction from \textsc{clique}.
This is shown by reduction to \kcr, noting that any satisfying solution to \kcr also satisfies \akcr and \minkcr.

In the Clique problem, henceforth \textsc{clique}, the input is a graph, $G$, and a natural number, $k$.
The goal is to find a clique of size $k$ in $G$, or report that no such clique exists.
A clique is a set of vertices in a graph such that all vertices in the set are adjacent to each other.

\medskip
Given an instance of \textsc{clique}, $I = (G,k) = ((V,E),k)$, where $n = |V|$, an instance, $I'$, of \kcr is constructed as follows.
A unit cell of arbitrary size is chosen.
Within this cell $2n$ unique positions are created at arbitrary points in the unit cell.
In the first $n$ positive ions are placed and in the last $n$ negative ions are placed.
Each ion has its own unique specie.
Every vertex $v_i \in V$ corresponds to two ions, $i^+$ and $i^-$ with charges $c$ and $-c$ respectively.
For two ions $i$ and $j$ associated with $v_i$ and $v_j$ respectively the parameters are set so as to satisfy the following:
\[
    U_{ij} = \begin{cases}
        -1 & \text{$v_i = v_j$ or $(v_i, v_j) \in E$}\\
        p & \text{otherwise.}
    \end{cases}
\]
\noindent
Where $p \in \mathbb{R}$ is some arbitrarily high penalty value that may be treated as effectively being equal to $\infty$ for any practical purpose.
The definition of \fcal guarantees that there exists parameters satisfying these conditions irrespective of the positions, and thus the distance $r_{ij}$, of the ions.
Note that there are $k(2k-1)$ edges in a clique of size $2k$.
Let $g = k(2k-1)$ and let $k' = n - k$. To remove $k'$ positive and $k'$ negative ions $c k'$ vertices must be removed.

The corresponding crystal graph $G' = (V', E')$ is constructed as described in the preliminaries.
Let the vertices $v_i^+, v_i^- \in V'$ represent the ions corresponding to $v_i \in V$.
$v_i^\pm$ is used to denote either $v_i^+$ or $v_i^-$, where the charge of the vertex doesn't matter i.e. we are only concerned with the vertex in $G$ that $v_i^\pm$ corresponds to.
From the definition of the energy function, $wt(v_i^\pm, v_j^\pm) = -1$ if $i = j$ or $(v_i, v_j) \in E$, and $p$ otherwise.

We claim that $I$ is satisfiable if and only if $I'$ is satisfiable.
First consider the case that $I$ is satisfiable.
In this case $c k'$ vertices may be removed from $I'$, leaving only the vertices corresponding to the clique in $I$, denoted $A$.
As all vertices in $A$ correspond to adjacent vertices in $G$, the energy is $-1$ multiplied by the number of edges, giving a total energy of $-k(2k-1)$, satisfying the \kcr instance.
Conversely if there does not exist a clique of size $k$ in $G$ then any subset of charges $A \subseteq V'$ of cardinality $k$ clearly must contain at least one edge with a weight of $p$, making $I$ unsatisfiable.
\end{proof}

\noindent
This may be extended to other graph problems relatively easily.
One example of this would be the \textsc{max-weight k-clique} problem.
The \textsc{max-weight k-clique} problem takes as input a weighted graph $G$, a natural number $k$, and a goal value $v$.
The problem is to report if a clique of size $k$, where the sum of the weights of the edges is at least $v$ exists.
Using the above construction, a crystal graph $G'$ may be created from $G$.
From this the weights on the edges may be adjust as follows:
\[
    U_{ij} = \begin{cases}
        -wt(v_i, v_j) & \text{$v_i \neq v_j$ and $(v_i, v_j) \in E$}\\
        -c & v_i = v_j\\
       p & \text{otherwise.}
    \end{cases}
\]
Where $p \in \mathbb{R}$ is some arbitrary large penalty value, $wt(v_i,v_j)$ denotes the energy between vertices $v_i$ and $v_j$ in $G$ and $c$ is some constant such that $\nexists (v_i, v_j) \in E$ where $wt(v_i, v_j) \geq c$.
The goal value for the \kcr instance is chosen as $-k\cdot c - v$.
The correctness of this reduction follows from the arguments in Theorem \ref{thm:kcr-np}.

\begin{theorem}
\kcr remains NP-Hard for a given set of allowed charges with unique magnitude and an energy function within \fcal.
\label{thm:kcr_np_any_charges}
\end{theorem}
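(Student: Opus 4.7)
The plan is to extend the reduction from \textsc{Clique} used in the proof of Theorem \ref{thm:kcr-np} so that it still works when every allowed ion charge must come from a fixed bounded set $C$ whose elements all have pairwise distinct magnitudes. The heart of the previous argument was the ability to create, for each vertex of the input graph, a positive/negative ion pair of equal magnitude $\pm c$; the challenge now is that in $C$ no positive and negative charge need share a magnitude, so a single ion pair is no longer automatically neutral. I would therefore replace each ``vertex pair'' by a small \emph{vertex-gadget}: after picking any positive $c^+\in C$ and any negative $-c^-\in C$, build for each $v_i\in V(G)$ a gadget $G_i$ consisting of $c^-$ ions of charge $+c^+$ and $c^+$ ions of charge $-c^-$. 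Then each $G_i$ is charge-neutral and contributes exactly $c^+c^-$ units of positive (and of negative) charge, so setting the removal target to $k' = (n-k)\cdot c^+ c^-$ corresponds to removing exactly $n-k$ whole gadgets.

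Next, every ion is given its own unique species so that every pairwise interaction may be tuned independently via the defining property of $\fcal$. The weights are set as follows: an intra-gadget edge is assigned a very large negative value $-M$ (strongly attractive), an edge between two ions in gadgets $G_i,G_j$ with $\{v_i,v_j\}\in E(G)$ is set to $-1$, and an edge between two ions in gadgets of non-adjacent vertices is set to $+\infty$. The goal energy $g$ is chosen as in Theorem \ref{thm:kcr-np}, now rescaled by the number of ion pairs within each gadget, so that the energy is achievable exactly when the surviving gadgets correspond to a $k$-clique in $G$. The \textsc{Clique}$\Leftrightarrow$\kcr equivalence is then verified as before.

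The main obstacle, and the only genuinely new part compared to Theorem \ref{thm:kcr-np}, is ruling out \emph{partial-gadget} removals: a priori, one might try to satisfy the charge-neutrality and $k'$-charge constraints by mixing ions from several gadgets. The role of the strong intra-gadget attraction $-M$ is precisely to make this strictly suboptimal. By choosing $M$ larger than the sum of absolute values of all possible inter-gadget contributions (a polynomial quantity in the size of the instance, permissible because $\fcal$-functions are computable in polynomial time), every kept ion from a partially-destroyed gadget creates a missing $-M$ term that cannot be compensated by any combination of $-1$ inter-gadget edges. Hence any optimal (or even feasible with respect to $g$) removal must take whole gadgets, which forces the kept gadgets to correspond to a $k$-clique in $G$, exactly as in Theorem \ref{thm:kcr-np}.

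Finally, membership in NP follows from Proposition \ref{prop:kcr_in_np} (and Corollary \ref{col:minkcr_in_np} for \minkcr, since the allowed magnitudes are bounded), so the result upgrades to NP-completeness along the same lines as before, and the extension to \akcr and \minkcr is immediate because the minimum-energy at-least-$k'$-charge removal still forces the removal of exactly $n-k$ entire gadgets.
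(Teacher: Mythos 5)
Your construction is correct, but it takes a genuinely different route from the paper's. The paper keeps the single positive/negative ion pair per vertex from Theorem \ref{thm:kcr-np}, now with charges $c$ and $d$ of unequal magnitude, and repairs the resulting global charge deficiency of $n(|c|-|d|)$ by appending two groups of dummy ions: one with all-zero interactions (sized via counts $t_c,t_d$ to absorb the deficiency of the surviving $k$-clique, hence meant to be kept) and one with all-infinite interactions (sized via $u_c,u_d$ and meant to be removed), the counts being chosen to satisfy small divisibility conditions. You instead restore neutrality \emph{locally}, replacing each vertex by a gadget of $c^-$ ions of charge $+c^+$ and $c^+$ ions of charge $-c^-$, and enforce all-or-nothing treatment of gadgets by a large intra-gadget attraction $-M$. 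Your route avoids the dummy-vertex bookkeeping entirely and keeps the initial cell neutral by construction; its price is a larger instance (each vertex becomes $c^++c^-$ ions, so the charge magnitudes must be bounded or given in unary --- an assumption the paper's dummy counts also implicitly need) and the extra atomicity lemma. On that lemma, one point is worth making explicit: since every positive ion has charge $c^+$ and every negative ion has charge $c^-$, a removal of exactly $k'=(n-k)c^+c^-$ charges fixes the number of kept ions at $k(c^++c^-)$, and by strict convexity of $\binom{m}{2}$ the number of surviving intra-gadget pairs is then maximised only by keeping whole gadgets; this is what guarantees that any partially destroyed gadget costs at least one $-M$ edge, which the polynomially many $-1$ inter-gadget edges cannot repay. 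With that spelled out, your argument goes through for \kcr, and the extension to \akcr and \minkcr works as you describe because removing more than $k'$ charges only reduces the number of kept ions further.
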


\begin{proof}
The construction of Theorem \ref{thm:kcr-np} may be extended to the case the set of vertices is limited to any set of allowed charges.
Two charges are chosen from this set, $c$ and $d$ where $|c| > |d|$ and $c  d < 0$ such that the difference between the absolute value of the charges, $|c| - |d|$, is minimised.
The same steps as in Theorem \ref{thm:kcr-np} are followed for the construction to get an initial crystal graph $G = (V, E)$ and $k'$.
Note that $I$ has a \emph{deficiency} of $n(|c| - |d|)$, where $n$ is the number of ions in the initial construction, meaning that some set of vertices must be added to make the cell neutral.
To handle the deficiency two sets of dummy vertices with charges of $c$ and $d$ are created.

The first set is to deal with the deficiency that would be left from a clique of size $k$.
To construct these, a natural number $t$ is chosen such that there exists a pair of natural numbers $t_c$ and $t_d$ such that $t_c  |c| = t$ and $t_d  |d| = k(|c| - |d|) + t$.
Using these, $t_c$ vertices with a charge of $c$ and $t_d$ vertices with a charge of $c$ and $t_d$ with a weight of $d$ are added.
From the definition of \fcal, the energy between them and all ions in $G$ and between each other is set as $0$.

The second set of dummy vertices are to counteract the overall deficiency in the initial unit cell.
A natural number $u$ is chosen such that there exists a pair of natural numbers $u_c$ and $u_d$, where $u = |c|  (u_c + t_c)$ and $u + n(|c| - |d|) = |d| (u_d + t_d)$.
$u_c$ vertices with a charge of $c$ and $u_d$ vertices with a charge of $d$ are added.
The potential energy between between them and all other vertices, including the set of previously added dummy vertices, is $P$.

To ensure that the optimal set of ions to be left with is a clique of size $k$ as well as all of the dummy vertices added in the first step, the following is done.
The goal energy remains the same as from Theorem \ref{thm:kcr-np}.
Observe that the only way to achieve this is to leave vertices corresponding to a clique of size at least $k$.
As there are $c (n + u_c + t_c)$ vertices for one set, and the goal is to be left with $c (k + t_c)$, a value $k'$ is chosen to remove as $c (n + u_c - k)$.
In the case that exactly $k'$-charges are removed, either the dummy vertices or some other vertices corresponding to a clique of size greater than $k$, ensuring the set remains neutral is left.
In the at-least-$k'$ case, some dummy vertices may also be removed provided the cell remains neutral.

From the arguments in Theorem \ref{thm:kcr-np} this is sufficient to ensure the new instance is satisfiable if and only if the original \textsc{clique} instance is.
Therefore these problems are NP-Hard, even in the case that there are distinct charges $c$ and $d$, $|c| \neq |d|$.
\end{proof}

\begin{theorem}
For any $\epsilon > 0$, \akcr for $k = 0$ cannot be approximated within a factor of $n^{1 - \epsilon}$, where $n$ is the number of ions, in polynomial time unless $P = NP$.
\label{thm:non_approximable}
\end{theorem}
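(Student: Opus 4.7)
I would reduce from (the gap version of) MAX CLIQUE, invoking Hastad's $n^{1-\delta}$-inapproximability result for any $\delta > 0$. The reduction reuses the crystal gadget constructed in Theorem~\ref{thm:kcr-np}: given a graph $G = (V,E)$ with $|V| = n$, build a crystal on $N = 2n$ vertices consisting of $n$ positive vertices $c_v$ (charge $+1$) and $n$ negative vertices $d_v$ (charge $-1$), one per vertex of $G$, placed at arbitrary pairwise-distinct positions. Exploiting that the energy function lies in \fcal, I would set $U(c_u, c_v) = -1$ whenever $u = v$ or $\{u,v\} \in E$, and $U(c_u, c_v)$ equal to a prohibitively large positive constant (say, $n^3$) otherwise; all pairs involving a $d$-vertex are set to $0$.

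\textbf{Correspondence between clique size and energy.} For any $S \subseteq V$, deleting $\{c_v : v \notin S\}$ together with any $n - |S|$ of the $d_v$'s is a feasible \akcr removal for $k=0$: it is charge-neutral and its positive charge sum is trivially non-negative. The remaining energy equals $-\binom{|S|}{2}$ if $S$ induces a clique in $G$, and is prohibitively large otherwise (any surviving pair of non-adjacent $c$-vertices alone contributes more than $n^3$). Conversely, any removal whose remaining energy is even finite must leave the surviving $c$-vertices forming a clique, and the trivial full removal leaves energy $0$, dominated by any clique of size at least $2$. Hence the optimum energy equals $-\binom{k^*}{2}$, where $k^*$ is the maximum clique size in $G$ (we may assume $k^* \ge 2$ by padding).

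\textbf{Transferring the gap.} Reading the objective as maximisation of the reward $\mathrm{rew}(S) = \binom{|S|}{2}$ over cliques $S$, a polynomial-time $\alpha$-approximation for \akcr would return a clique $C$ with $|C| \ge k^*/\sqrt{\alpha}\,(1-o(1))$. Plugging in $\alpha = N^{1-\epsilon} = O(n^{1-\epsilon})$, this yields a polynomial-time MAX-CLIQUE approximation of ratio $O(n^{(1-\epsilon)/2})$. For any fixed $\epsilon>0$ one can choose $\delta>0$ with $(1-\epsilon)/2 < 1-\delta$, contradicting Hastad's theorem and thus proving the claim.

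\textbf{Main obstacle.} The principal care-point is that the optimum energy in the reduction is negative, so the textbook ``$\alpha$-approximation for minimisation'' definition does not have its usual meaning. The cleanest way to handle this, which I would adopt, is to phrase the argument in gap-preserving form: a MAX-CLIQUE instance-gap of $n^{1-\delta}$ is mapped, via the quadratic dependence of the energy on $|S|$, into an energy-magnitude gap of roughly $n^{2(1-\delta)}$ between YES and NO \akcr instances. Any algorithm distinguishing these energy values within a ratio $n^{1-\epsilon}$ would also resolve the original MAX-CLIQUE gap, giving the stated inapproximability without sign ambiguity.
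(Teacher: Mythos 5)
Your proposal is correct and follows essentially the same route as the paper: both reduce from \textsc{max-clique} via the crystal gadget of Theorem~\ref{thm:kcr-np} and invoke H{\aa}stad's $n^{1-\delta}$ inapproximability bound. You are in fact more careful than the paper's own argument, which asserts the optimal energy is $-k$ and that the approximation factor transfers unchanged, whereas the remaining energy actually scales quadratically with the clique size; your explicit $\sqrt{\alpha}$ loss (still sufficient against H{\aa}stad by choosing $\delta$ appropriately) and your gap-preserving treatment of the negative objective address points the paper glosses over.
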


\begin{proof}
This result follows from the results of H{\aa}stad \cite{Hastad1999}, who showed an approximation bound of $n^{1 - \epsilon}$ for Max-Clique.
Using the reductions from Theorems \ref{thm:kcr-np} and \ref{thm:kcr_np_any_charges} let the minimum energy after a removal of at least 0 vertices be $e$.
Note that this corresponds to the lowest potential energy of the instance.
From the reductions, it is clear that $e = -k$, where $k$ is the size of the clique, or $P$ if the remaining ions do not correspond to a clique.
Note that as $P$ may be arbitrarily large, given an approximation algorithm for any instance of \akcr with $k = 0$ an approximation algorithm for Max-Clique may be derived that approximates the instance of Max-Clique to the same factor as it approximates the \akcr instance.
Therefore any bounds on the approximation of Max-Clique must also apply to this problem, hence \akcr can not be approximated within a factor of $n^{1 - \epsilon}$ for any $\epsilon > 0$ within polynomial time unless $P = NP$.
\end{proof}

\noindent
While there are simple reductions to other NP-Complete problems such as Integer Programming, embedding this problem into many classical problems is made difficult due to the problem of maintaining the neutrality of the unit cell.
To this end, Theorem \ref{thm:kcr_max_weight_clique} provides a novel polynomial time reduction to show how a restricted version of \kcr may be embedded into \textsc{max-weight k-clique}.

\begin{theorem}
\label{thm:kcr_max_weight_clique}
\kcr can be reduced to \textsc{max-weight k-clique} in polynomial time, under the restriction that vertices are limited to charges of $\pm c$ and the energy function is computable within polynomial time.
\end{theorem}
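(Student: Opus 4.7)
The plan is to flip perspective and encode what \emph{remains} rather than what is removed. Since the input $G$ is charge-neutral with every weight in $\{+c,-c\}$, it has exactly $n/2$ positive and $n/2$ negative vertices, and any valid $k$-charge removal deletes $k/c$ of each sign (if $c \nmid k$ the instance is trivially infeasible, so output a no-instance). The retained set $A$ therefore has fixed cardinality $k' = n - 2k/c$ and must itself be \emph{balanced}, containing $k'/2$ vertices of each sign. Solving \kcr is thus equivalent to selecting a balanced subset $A$ of size $k'$ minimising $\sum_{\{v_i,v_j\} \subseteq A} U_{ij}$.

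To embed this into \textsc{max-weight k-clique} I would keep the same vertex set and build a complete edge-weighted graph $H$ whose weights both negate $U$ (so that maximisation mirrors energy minimisation) and softly enforce balance through a large constant
\[
W > g + \sum_{i \neq j} |U_{ij}|,
\]
which has polynomial bit-length because $U$ is polynomial-time computable and $g$ is the target energy from the \kcr decision instance. Set
\[
w_{ij} = \begin{cases} -U_{ij} - W & \text{if } \operatorname{sgn}(wt(v_i)) = \operatorname{sgn}(wt(v_j)), \\ -U_{ij} + W & \text{otherwise,} \end{cases}
\]
and ask whether $H$ admits a clique of size $k'$ with total weight at least $-g + Wk'/2$. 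Because $H$ is complete, every size-$k'$ vertex subset is automatically a clique, so the query is well-posed.

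For correctness, a direct expansion shows that any subset of $k'$ vertices with $p$ positive entries contributes a $W$-term of $W\bigl[p(k'-p) - \binom{p}{2} - \binom{k'-p}{2}\bigr] = Wk'/2 - 2W(p - k'/2)^2$, a strictly concave function of $p$ uniquely maximised at $p=k'/2$ and dropping by at least $2W$ for every unit deviation. The choice of $W$ makes this margin dominate any possible gain from the $-U_{ij}$ contributions, so every candidate reaching the target threshold must be balanced; among balanced candidates the $W$-contribution equals the fixed constant $Wk'/2$, so maximising $\sum w_{ij}$ is equivalent to minimising the residual energy $\sum U_{ij}$ over $A$. The equivalence of the two decision thresholds follows immediately, and the construction is polynomial-time.

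The main obstacle is that \textsc{max-weight k-clique} has no native mechanism to enforce composition constraints, whereas \kcr crucially requires the retained (equivalently, removed) set to be charge-neutral. The quadratic-penalty trick above turns balance into a strictly concave function of the positive count $p$ whose optimum coincides exactly with the required composition $p = k'/2$, and padding the gap by an $\Omega(W)$ margin protects this constraint from being undermined by the physically meaningful energy terms $-U_{ij}$. Once this idea is in place, assembling $H$, choosing $W$ and $k'$, and translating the goal to $-g + Wk'/2$ are routine polynomial-time manipulations.
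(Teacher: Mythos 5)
Your reduction is correct, but it is genuinely different from the one in the paper. The paper enforces charge-neutrality of the retained set \emph{structurally}: it builds an auxiliary graph with one vertex for every (positive, negative) pair of original vertices and puts an edge between two such pair-vertices only when all four underlying vertices are distinct, so that a clique of size $k'$ is forced to be a system of $k'$ disjoint pairs, i.e.\ a balanced set; the pairwise energies are then redistributed over the clique edges, with an extra $\tfrac{1}{k'-1}$ fraction of the within-pair energies added to each edge so that the clique weight totals the negated energy of the retained set. You instead keep the original $n$ vertices, make the graph complete, and enforce balance \emph{softly} via the additive penalty $\pm W$: your identity $W\bigl[p(k'-p)-\binom{p}{2}-\binom{k'-p}{2}\bigr]=Wk'/2-2W(p-k'/2)^2$ is correct, the margin of $2W$ per unit of imbalance does dominate the energy terms for your choice of $W$, and on balanced sets maximising clique weight is exactly minimising retained energy, so the two decision thresholds match. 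Your version buys a smaller instance ($n$ vertices rather than $\Theta(n^2)$), avoids the somewhat delicate fractional weight-splitting of the paper, and gives a clean algebraic proof that balance is enforced; the cost is a big-$M$-style penalty, which is harmless here because polynomial-time computability of $U$ bounds the bit-length of $\sum_{i\neq j}|U_{ij}|$. The only nit: state explicitly that $W$ is positive (e.g.\ $W = 1 + \max(0,g) + \sum_{i\neq j}|U_{ij}|$), since the inequality $W > g + \sum_{i\neq j}|U_{ij}|$ alone does not force $W>0$ when that right-hand side is negative, and both the concavity argument and the domination of the $2W$ margin need $W>0$.
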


\begin{proof}
Note that, given charges of $\pm c$, a valid solution to \minkcr is either valid for \kcr, or there is no valid solutions to \kcr.
Taking as input an instance of \minkcr with charges of $\pm c$ with the corresponding crystal graph, it is claimed that this instance may be represented as an instance of the weighted generalisation of \textsc{clique}.

In weighted $k$-clique, denoted \textsc{max-weight k-clique}, the input is a weighted graph, a goal value $v$, and a natural number $k$.
An instance of \textsc{max-weight k-clique} is satisfiable if and only if there exist a clique of size $k$ such the sums of the weight of the edges in the clique is at least $v$.

\medskip
Given an instance of \minkcr $I = (G,k) = \{(V, E), k\}$, an instance $I'$ of \textsc{max-weight k-clique} is created as follows.
A value $k'$ is chosen as $\frac{|V^+| - k}{c}$ rounded down to the nearest natural number.
The reason for this choice is to ensure that the optimal clique has size equal to the number of vertices left after removing $k$ charges.
Note that if $(|V^+| - k) (\mod c) \not\equiv 0$, then there is no valid solution to \kcr, however there may still be some valid solution to \minkcr.
A new graph $G' = (V', E')$ is created which is initially empty.
For each pair of vertices with different charges a new associated vertex in $V'$ is created.
An edge is created between each new vertex if and only if the corresponding charges are all unique, i.e. given the set of charges $V^+ = \{v_i, v_j\},$ and $v^- = \{v_k, v_l\}$ an edge would be placed between the new vertex representing $(v_i,v_k)$ and the one representing $(v_j, v_l)$, but not from either to the vertex representing $(v_i,v_l)$.
Give two connected vertices corresponding to charges $(v_i,v_k)$ and $(v_j,v_l)$ the edge between them is assigned a value of $- \left(U_{ij} + U_{il} + U_{jk} + U_{kl} + \frac{U_{ij} + U_{jl}}{k' - 1}\right)$.
The intuition behind this is for the edge to maintain the weights of the edges in $G$.
$\frac{U_{ik} + U_{jl}}{k' - 1}$ is added to this so that within a clique of size $k'$, the edge between the two vertices is fully represented.
An example of this construction is shown in Figure \ref{fig:kcr-maxClique}, omitting weights for legibility.

\begin{figure}[h]
\vspace*{4mm}
    \centering
    \includegraphics[scale=0.95]{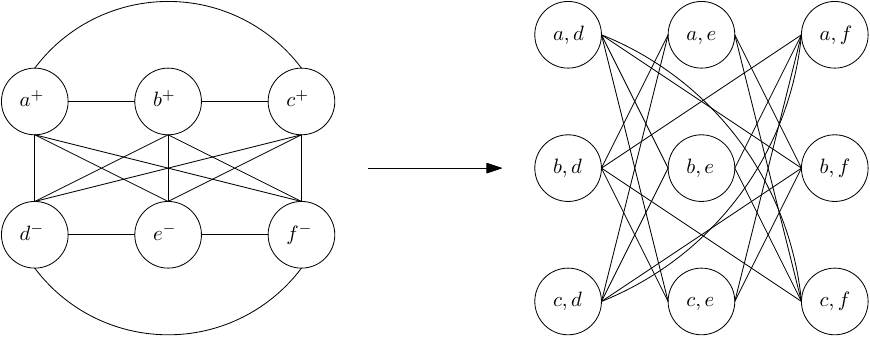}
    \caption{Example of the construction from \akcr to clique. Note that vertices $a,b,$ and $c$ have a positive weight, while $d,e,$ and $f$ have a negative weight. Also note that any clique of size 3 corresponds to the original graph.}
    \label{fig:kcr-maxClique}
\end{figure}

It is now be claimed that any clique of size $m$ corresponds to a neutrally weighted subset $A \subseteq V$, where $\left\lvert\sum\limits_{v_i \in A^+} wt(v_i)\right\rvert = m c$.
This is shown by noting that vertices are only connected if they do not represent a common vertex.
As such a clique of size $m$ must contain $m$ unique positively weighted and $m$ unique negatively charged vertices for the corresponding vertices to be connected as a clique.
Therefore by selecting any clique of size $k'$ in this graph, there is a valid structure left with exactly $k'$ unique positively weighted and $k'$ unique negatively charged vertices.
From the definition of $k'$ this corresponds to a subgraph of $G$ after a minimal removal of $k$.

\eject
It may now be claimed that a maximum weight clique of size $k'$ corresponds to the best subset of ions after a removal of $k$ charges.
Note that given a clique with total weight $w$ corresponds to a set of ions with total energy $-w$.
It is a straightforward extension to see that a maximum weight clique corresponds to a minimum energy subset of ions.
This is seen by noting that by choosing $k'$ as the size of the clique, the corresponding arrangement $A \subseteq V$ has $\left\lvert\sum\limits_{v_i \in A^+} wt(v_i)\right\rvert = c  k'$.
From the definition of $k'$, this requires $\left\lvert \sum\limits_{v_i \in A^+} wt(v_i)\right\rvert \leq \left\lvert\sum\limits_{v_i \in V^+} wt(v_i)\right\rvert - k$, which satisfies the requirements for a $k$-Charge Removal.
Conversely the definition of $k'$ ensures that the removal must be minimal.
Therefore the optimal solution to the \textsc{max-weight k-clique} instance must correspond to an optimal solution to the \minkcr \kcr instance.
Similarly any valid solution to the \textsc{max-weight k-clique} instance corresponds to some solution to the \minkcr instance.
\end{proof}

\section{Bounded number of species with Buckingham-Coulomb potential}
\label{sec:kcr_ind_set}

\noindent
In Section \ref{sec:kcr} NP-Hardness was shown for the case that there was an unbounded number of species, and NP-completeness in the case that there is a bounded number of charge values.
This is strengthened by considering instances with only two unique species.
Only the Buckingham-Coulomb potential function with charges of $\pm 1$ is considered in this section.
All three problems are again considered, noting that for charges of $\pm 1$ \kcr is equivalent to \minkcr.
NP-Hardness is shown by a reduction from Independent Set problem denoted \textsc{independent-set}, on penny graphs - adapting it to the Euclidean settings of a crystal graph of ions within a unit cell.
The Independent Set problem takes as input a graph, $G$, and a natural number $k$.
The goal is to find an \emph{independent set}, i.e. a set of vertices such that no two are adjacent, of size $k$ in $G$, or report that one does not exist.
A \emph{Penny graph} is a graph where each vertex may be drawn as a unit circle such that no two circles overlap, and an edge between two vertices exists if and only if the corresponding circles are tangent, i.e. they intersect at only a single point.
Finding an independent set on this class of graphs was shown to be NP-Hard by Cerioli et al. \cite{ITA_2011__45_3_331_0}.
The NP-Hardness result for this problem is shown by a reduction from \textsc{max-degree 3 planar vertex cover}, shown to be NP-Complete by Garey and Johnson \cite{doi:10.1137/0132071}.

\medskip \noindent
{\bf Construction of the \kcr instance:}
Let $I = (G,k)$ be an instance of \textsc{independent-set} where $G = (V,E)$ is a planar graph with a maximum degree of $3$ and $k \in \mathbb{N}$ is the size of the target independent set.
An instance of \kcr is created as follows.
Using Theorem 1.2 from Cerioli et al. create a new penny graph realisation, $G'$, and a new natural number $k$.
The class of graphs created by this process is denoted as the \emph{long orthogonal penny graphs}.
The radius of each circle for $G'$ is chosen as $\frac{n}{2}$.

A region of space in $\mathbb{R}^3$ with a height of at least $1$ and a width and length allowing $G'$ to be drawn is created.
This space is the parallelepiped for the unit cell.
In this space, two copies of $G'$ are drawn such that one is directly above the other at a distance of $1$.
For every circle in $G'$ two ions are created, one in the lower copy of $G'$ and the other in the higher copy.
Each ion is labelled with the vertex from $G'$ it corresponds to.
In this context \emph{pair} refers to the two ions in the new crystal graph, labelled with the same vertex from $G'$.
Two pairs are \emph{neighbouring} if they represent vertices that are adjacent in $G'$.
The lower ions are assigned the positive specie and the upper ions the negative.
An example of this arrangement is provided in Figure \ref{fig:ind_set_ex}.
Note that the minimum distance between two pairs in the same plane that are non-adjacent for circles with a radius of $\frac{n}{2}$ is $\sqrt{2}n$, as shown in Figure \ref{fig:penny_distances}.

The positive and negative species are assigned charges of $+1$ and $-1$ respectively.
In general there are 3 sets of parameters to choose determining the interaction between two positive ions, two negative ions, and one positive and one negative ion.
For simplicity, the parameters determining the interaction between two positive ions and the parameters determining the interaction between two negative ions are treated as being the same.
Informally, this means the the energy between two negative ions at a distance of $r$ from each other is the same as the energy between two positive ions at a distance of $r$.
For brevity, 1 and 2 are used to denote the positive and negative specie respectively.
With this notation, the parameters that may be set are $A_{11}, B_{11}, C_{11}, A_{12}, B_{12},$ and $C_{12}$.

\medskip
An independent set is said to be \emph{left} if the ions left after a removal of $k'$ charges have labels corresponding to an independent set in $G'$.
Let $k' = n - k$, be the number of charges that are required to be removed to be left with an independent set of size $k$.
Note that as the charge of each ion has a magnitude of one, a removal of $k'$ can only be achieved by removing $k'$ positive and $k'$ negative ions.
The goal energy for the construction is set as $g = (k - 1)(\frac{A_{12}}{e^{B_{12}}} - C_{12} - 1)$.
To simplify the equations regarding the interaction between planes, $\widehat{r}$ is used to denote $\sqrt{r^2 + 1}$.
To ensure that an independent set is left of size $k$ if and only if one exists, the following three inequalities must be satisfied:
\begin{align}
    \frac{A_{11}}{e^{B_{11} n}} - \frac{C_{11}}{n^6} + \frac{1}{n} + \frac{A_{12}}{e^{B_{12} \widehat{n}}} - \frac{C_{12}}{\widehat{n}^6} - \frac{1}{\widehat{n}} &\geq \left|\frac{A_{12}}{e^{B_{12}}} - C_{12} - 1\right|\label{ineq:pos_in_cirlce}\\
    n^2 \left|\frac{A_{11}}{e^{B_{11} r}} - \frac{C_{11}}{r^6} + \frac{1}{r} + \frac{A_{12}}{e^{B_{12} \widehat{r}}} - \frac{C_{12}}{\widehat{r}^6} - \frac{1}{\widehat{r}}\right| &\leq \left|\frac{A_{12}}{e^{B_{12}}} - C_{12} - 1\right|, &r \geq \sqrt{2}n\label{ineq:neg_out_cirlce}\\
    \frac{A_{11}}{e^{B_{11} r}} - \frac{C_{11}}{r^6} + \frac{1}{r} + \frac{A_{12}}{e^{B_{12} \widehat{r}}} - \frac{C_{12}}{\widehat{r}^6} - \frac{1}{\widehat{r}} &> 0, &r \geq \sqrt{2}n \label{ineq:same_plane_pos}
\end{align}

\noindent
At a high level, these inequalities are used as follows.
Inequality \ref{ineq:pos_in_cirlce} ensures that the positive interaction between some ion $i$ and the adjacent pair is greater than the interaction between $i$ and the other ion $j$ in the same pair.
This ensures that the cost of keeping two adjacent pairs is greater than the negative energy gained by keeping it.
Inequality \ref{ineq:neg_out_cirlce} ensures that the total positive interaction energy between some ion $i$ and every ion further than $n$ is no more than the negative energy given by maintaining the pair containing $i$.
This ensures that, given a pair of ions representing a vertex that is not adjacent to any other pairs, it is better to keep the pair in the structure rather than to remove the pair.
Finally, Inequality \ref{ineq:same_plane_pos} is used to ensure that the interaction between two ions on the same plane leads to a positive energy penalty no matter the distance.
This is used to bound the total energy from above.

\medskip
The following Lemmas use these inequalities as follows.
Lemma \ref{lem:ineq_satisfiable} show that there exists some parameters for the Buckingham-Coulomb potential satisfying the inequalities, allowing them to be used as a tool when considering the subsequent Lemmas.
Lemmas \ref{lem:ind_set_ineq} and \ref{lem:ind_set_energy_bounds} assume that it is preferable to choose $k'$ pairs over any other set of charges in the arrangement.
Lemma \ref{lem:ind_set_ineq} shows that the optimal removal will result in an independent set.
Lemma \ref{lem:ind_set_energy_bounds} provides an upper bound on the interaction energy for this setting.
Finally Lemma \ref{lem:pairs_pref} proves that it is always preferable to choose $k'$ pairs over any other set of charges in the arrangement.

\begin{lemma}
\label{lem:ineq_satisfiable}
There exists, for any structure created from a long orthogonal penny graph, some parameters for the Buckingham-Coulomb potential such that Inequalities (\ref{ineq:pos_in_cirlce}, \ref{ineq:neg_out_cirlce}) and (\ref{ineq:same_plane_pos}) are satisfied.
\end{lemma}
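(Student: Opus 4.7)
The plan is to exhibit concrete parameter choices that simultaneously satisfy all three inequalities. I will exploit the fact that only Inequality~(\ref{ineq:pos_in_cirlce}) is ``hard'' at the touching distance $r=n$, whereas Inequalities~(\ref{ineq:neg_out_cirlce}) and (\ref{ineq:same_plane_pos}) need to hold for the entire range $r\geq\sqrt{2}n$ but can be made almost trivial by forcing the exponential Buckingham term to decay between $r=n$ and $r=\sqrt{2}n$. To keep the algebra manageable I would set $C_{11}=0$, $A_{12}=0$, and any fixed $B_{12}>0$. This collapses the pair (cross-specie, distance $1$) energy to a single quantity $q := -C_{12}-1$, so that once $C_{12}$ is chosen the pair contribution is fixed, and the sole remaining freedom is in $A_{11}, B_{11}$.

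Step~1: I would first bound the tail Coulomb term. For $r\geq\sqrt{2}n$, writing $\widehat r =\sqrt{r^2+1}$,
\begin{equation*}
0 \;<\; \tfrac1r-\tfrac1{\widehat r} \;=\; \tfrac{\widehat r-r}{r\widehat r} \;\leq\; \tfrac{1}{2r^3} \;\leq\; \tfrac{1}{4\sqrt{2}\,n^3},
\end{equation*}
so $n^2\bigl(\tfrac1r-\tfrac1{\widehat r}\bigr)$ is uniformly bounded by an explicit constant in $1/n$. The same kind of estimate controls the $C_{12}/\widehat r^{\,6}$ contribution, since $\widehat r\geq \sqrt{2}n$ in this regime.

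Step~2: I would fix $C_{12}$ (and hence $|q|$) to be larger than the constants produced in Step~1, so that both the ``$\leq|q|$'' Inequality~(\ref{ineq:neg_out_cirlce}) and the positivity Inequality~(\ref{ineq:same_plane_pos}) are automatic at $r\geq\sqrt{2}n$ \emph{once} the $A_{11}$-Buckingham term is negligible in that range. Positivity of Inequality~(\ref{ineq:same_plane_pos}) is then underwritten by $\tfrac1r-\tfrac1{\widehat r}>0$ together with the $A_{11}$-term being nonnegative.

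Step~3: The only coupling left is between Inequality~(\ref{ineq:pos_in_cirlce}) at $r=n$ and the tail inequalities at $r\geq\sqrt{2}n$, both of which involve $A_{11}/e^{B_{11}r}$. I need $A_{11}/e^{B_{11}n}$ to be at least $|q|$ (plus the small corrections from the other terms at $r=n$), while simultaneously $A_{11}/e^{B_{11}\sqrt{2}n}$ must be at most $|q|/n^2$. The ratio of these two quantities is $e^{B_{11}n(\sqrt{2}-1)}$, so it suffices to pick $B_{11}$ with
\begin{equation*}
B_{11}\;\geq\;\frac{2\log n}{n(\sqrt{2}-1)},
\end{equation*}
and then to scale $A_{11}$ large enough that $A_{11}/e^{B_{11}n}$ exceeds the required lower bound in Inequality~(\ref{ineq:pos_in_cirlce}). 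Both choices can be made independently and do not conflict with the choice of $|q|$ in Step~2.

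The main obstacle I anticipate is exactly this interlocking of $A_{11}$ and $B_{11}$: one needs the Buckingham term to look ``large'' at $r=n$ but ``tiny'' at $r=\sqrt{2}n$, and the budget for this is controlled by a single exponential gap. Once the explicit quantitative bound on that gap in Step~3 is verified, the rest of the argument reduces to monotone estimates on $1/r-1/\widehat r$ and $C_{12}/\widehat r^{\,6}$, both of which are straightforward. I would close by collecting the parameter choices $(A_{11},B_{11},C_{11}=0,A_{12}=0,B_{12},C_{12})$ and substituting them back to verify each of (\ref{ineq:pos_in_cirlce})--(\ref{ineq:same_plane_pos}) directly.
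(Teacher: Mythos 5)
Your construction is correct and rests on the same core mechanism as the paper's proof --- pin the within-pair (cross-specie, distance $1$) energy to a known negative constant, then exploit the exponential gap of the Buckingham term $A_{11}e^{-B_{11}r}$ between $r=n$ and $r=\sqrt{2}n$ so that the same-plane repulsion is dominant at touching distance and negligible beyond --- but your parameter decomposition is genuinely different and somewhat leaner. The paper takes $A_{12}=C_{12}=\tfrac{1}{2n^2}$ and $B_{12}=0$, normalising the pair energy to exactly $-1$ and, crucially, leaving a constant positive floor of $\tfrac{1}{2n^2}$ in the cross-plane term which it uses to certify Inequality~(\ref{ineq:same_plane_pos}); it also keeps $C_{11}$ nonzero, tuned so the Buckingham part vanishes exactly at $r=\sqrt{2}n$, and then takes $B_{11}=n$. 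You instead zero out $C_{11}$ and $A_{12}$, derive positivity from the Coulomb difference $\tfrac1r-\tfrac1{\widehat r}>0$, and need only $B_{11}\gtrsim \tfrac{2\log n}{n(\sqrt2-1)}$; this avoids the paper's simultaneous solve for $A_{11}$ and $C_{11}$. One loose end to patch: your Step~2 claims Inequality~(\ref{ineq:same_plane_pos}) is ``underwritten by $\tfrac1r-\tfrac1{\widehat r}>0$ together with the $A_{11}$-term being nonnegative,'' which overlooks the negative term $-C_{12}/\widehat r^{\,6}$. Since your Step~1 bounds force $C_{12}$ to be only a constant, this is harmless but needs a line: $\tfrac1r-\tfrac1{\widehat r}=\tfrac{1}{r\widehat r(\widehat r+r)}\geq \tfrac{1}{4r^3}$ dominates $\tfrac{C_{12}}{\widehat r^{\,6}}\leq \tfrac{C_{12}}{r^6}$ for all $r\geq\sqrt2 n$ once $n^3>\sqrt2\, C_{12}$. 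Similarly, in Step~3 you should leave slack (e.g.\ require $n^2A_{11}/e^{B_{11}\sqrt2 n}\leq |q|/2$ rather than $|q|$) so that the $O(1/n)$ correction terms are absorbed in Inequality~(\ref{ineq:neg_out_cirlce}); with those two adjustments the argument closes.
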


\begin{proof}
Values are chosen for $A_{12}, B_{12}$ and $C_{12}$ such that the energy for any pair of ions of opposite vertex at a distance of 1 is $-1$.
This is achieved by choosing a value of $\frac{1}{2n^2}$ for $A_{12}$, 0 for $B_{12}$, and $\frac{1}{2n^2}$ for $C_{12}$.
This simplifies the energy equation to:
\begin{equation*}
    U^{BC}(r) = \frac{A_{11}}{e^{B_{11}r}} - \frac{C_{11}}{r^6} + \frac{1}{r} + \frac{1}{2n^2} - \frac{1}{2 n^2 \widehat{r}^6} - \frac{1}{\widehat{r}}.
\end{equation*}

\noindent
To satisfy Inequality \ref{ineq:pos_in_cirlce}, $U^{BC}(n) > 1$.
This may be satisfied by choosing values for $A_{11}$, $B_{11}$, and $C_{11}$ such that $\frac{A_{11}}{e^{B_{11}n}} - \frac{C_{11}}{n^6} = 1$, noting that $\frac{1}{2n^2} - \frac{1}{2 n^2 \widehat{r}^6} + \frac{1}{r} - \frac{1}{\widehat{r}} > 0$ for all positive distances greater than~1.
This is satisfied by solving the equation $\frac{A_{11}}{e^{B_{11}n}} - \frac{C_{11}}{n^6} = 1$, choosing $A_{11} = \frac{C_{11} e^{B_{11}n}}{n^6} + e^{B_{11}n}$.

\medskip
Inequality (\ref{ineq:neg_out_cirlce}) requires that at a distance of at least $\sqrt{2}n$ the total energy is no more than $\frac{1}{n^2}$.
This is satisfied at a distance of $\sqrt{2}n$ by ensuring that the $\frac{A_{11}}{e^{B_{11}\sqrt{2}n}} - \frac{C_{11}}{8n^6} = 0$, which is satisfied after substituting in the appropriate value for $A_{11}$ with $C_{11} = \frac{e^{B_{11}n}}{e^{B_{11}\sqrt{2}n}\left(\frac{1}{\left(\sqrt{2}n\right)^6} - \frac{e^{B_{11}n}}{n^6e^{B_{11}\sqrt{2}n}}\right)}$, which simplifies to $\frac{8n^6}{e^{(\sqrt{2} - 1)B_{11} \cdot n} - 8}$.
Finally, consider the value of $B_{11}$.
Note that the value of both $A_{11}$ and $C_{11}$ depend greatly on $B_{11}$, with a small increase in $B_{11}$ leading to a very rapid increase in the value of $A_{11}$ and a rapid decrease in $C_{11}$.
Similarly, the value of the energy given by $\frac{A_{11}}{e^{B_{11}r}} - \frac{C_{11}}{r^6}$ rapidly decreases initially before converging at approximately 0, noting that the first derivative with respect to $r$ of this equation for a given $n$ is $-B_{11}\frac{A_{11}}{e^{B_{11}r}} + \frac{6C_{11}}{r^7}$.
As such by choosing a suitably large $B_{11}$ Inequality (\ref{ineq:neg_out_cirlce}) is easily satisfied, one obvious choice for this would be $B_{11} = n$.

\medskip
Note that both $\frac{A_{11}}{e^{B_{11}r}}$ and $\frac{C_{11}}{r^6}$ strictly decrease, therefore if $\left|\frac{A_{11}}{e^{B_{11}r}}\right| \leq \frac{1}{2n^2} - \frac{1}{r} + \frac{1}{\widehat{r}}$ and $\left|\frac{C_{11}}{r^6}\right| \leq \frac{1}{r} - \frac{1}{\widehat{r}}$ both Inequalities (\ref{ineq:neg_out_cirlce}) and (\ref{ineq:same_plane_pos}) are satisfied.
From the value of $B_{11}$ this becomes $\frac{n^6}{e^{(\sqrt{2} - 1)n^2} - 8}$, which is positive and less than $\frac{1}{n^2}$ for any $n \geq 6$.
Note that $\frac{1}{r} - \frac{1}{\widehat{r}} \geq \frac{1}{r^6}$ for any $r \geq 1.3$, hence it is clear that $\frac{C_{11}}{r^6} \leq \frac{1}{r} - \frac{1}{\widehat{r}}$ for $n \geq 6$.
Considering $\frac{A_{11}}{e^{B_{11}r}}$ at a distance of $\sqrt{2}n$, the equation becomes $\frac{A_{11}}{e^{n^2 \sqrt{2}}} = \frac{C_{11}}{n^6 e^{n^2 \sqrt{2}}} + \frac{1}{e^{(\sqrt{2} - 1)n^2}}$, from the previous arguments it follows that this is considerably less than $\leq \frac{1}{2n^2} - \frac{1}{r} + \frac{1}{\widehat{r}}$ for $n \geq 6$.

Noted that due to the constant $\frac{1}{2n^2}$ term there is a positive value for any distance greater than $\sqrt{2}n$, satisfying Inequality (\ref{ineq:same_plane_pos}).
Using these values, it has now been shown how to design parameters for the Buckingham-Coulomb potential which satisfy the Inequalities.
\end{proof}

\begin{lemma}
\label{lem:ind_set_ineq}
Inequalities (\ref{ineq:pos_in_cirlce}) and (\ref{ineq:neg_out_cirlce}) are sufficient to ensure that an independent set is left if one exists in the original \textsc{independent-set} instance.
\end{lemma}

\begin{proof}
Inequality (\ref{ineq:pos_in_cirlce}) ensures that if there are two pairs corresponding to points that intersect, the total energy always decreases by removing one of the pairs.
Inequality (\ref{ineq:neg_out_cirlce}) complements (\ref{ineq:pos_in_cirlce}) by ensuring that given a pair corresponding to a vertex with no adjacent neighbours, the total energy would increase by removing it.
This holds even in the case that all other pairs are at a distance of $\sqrt{2}n$.
Inequalities (\ref{ineq:pos_in_cirlce}) and (\ref{ineq:neg_out_cirlce}) combined means that the global minimum total energy for any subset is the maximum independent set.
Note that the total energy decreases with the cardinality of the given independent set.
\end{proof}

\begin{lemma}
\label{lem:ind_set_energy_bounds}
Given $k$ pairs, the energy is less than $(k-1)(\frac{A_{12}}{e^{B_{12}}} - C_{12} - 1)$ if and only if the pairs correspond to an independent set of size $k$, for $\frac{A_{12}}{e^{B_{12}}} - C_{12} - 1 < 0$.
\end{lemma}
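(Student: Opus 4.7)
My plan is to expand the total potential energy of the $k$ pairs as an intra-pair part plus an inter-pair part, and then apply the three inequalities (\ref{ineq:pos_in_cirlce})--(\ref{ineq:same_plane_pos}) from the construction to establish the claimed dichotomy.

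First I would set $\alpha := \frac{A_{12}}{e^{B_{12}}} - C_{12} - 1 < 0$ and decompose the energy. Every pair consists of a positive ion sitting directly below a negative ion at vertical distance $1$, and so contributes exactly $\alpha$ to the intra-pair energy, giving $k\alpha$ in total. For any two distinct pairs at horizontal distance $r$ there are precisely four inter-pair ion--ion interactions: one pos--pos and one neg--neg at distance $r$, and two opposite-charge interactions at distance $\widehat{r} = \sqrt{r^2+1}$. Writing
\[
\phi(r) = \frac{A_{11}}{e^{B_{11}r}} - \frac{C_{11}}{r^6} + \frac{1}{r} + \frac{A_{12}}{e^{B_{12}\widehat{r}}} - \frac{C_{12}}{\widehat{r}^6} - \frac{1}{\widehat{r}},
\]
these four interactions sum to $2\phi(r)$ (this is precisely the expression appearing in the left-hand sides of the three inequalities). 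Hence the total energy admits the clean form $E = k\alpha + 2\sum_{i<j}\phi(r_{ij})$.

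Next, I would use the geometry of the long orthogonal penny realisation: any two pairs are at horizontal distance exactly $n$ when their vertices are adjacent in $G$, and at distance at least $\sqrt{2}n$ otherwise. Inequality (\ref{ineq:pos_in_cirlce}) then gives $\phi(n) \geq |\alpha|$; Inequality (\ref{ineq:neg_out_cirlce}) gives $|\phi(r)| \leq |\alpha|/n^2$ for $r \geq \sqrt{2}n$; and Inequality (\ref{ineq:same_plane_pos}) gives $\phi(r) > 0$ for $r \geq \sqrt{2}n$.

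For the forward direction, if the $k$ pairs form an independent set then every $r_{ij} \geq \sqrt{2}n$, so $0 < \phi(r_{ij}) \leq |\alpha|/n^2$, giving $E \leq k\alpha + k(k-1)|\alpha|/n^2$. Since the construction contains only $n$ pairs in total we have $k \leq n$, so $k(k-1)/n^2 < 1$ and thus $k(k-1)|\alpha|/n^2 < |\alpha|$, yielding $E < k\alpha + |\alpha| = (k-1)\alpha$. For the converse, if the $k$ pairs are \emph{not} an independent set then at least one pair-of-pairs sits at distance $n$; combining $\phi(n) \geq |\alpha|$ on that term with the positivity of $\phi$ on all remaining terms gives $E \geq k\alpha + 2|\alpha| = (k-2)\alpha$, and because $\alpha<0$ we have $(k-2)\alpha > (k-1)\alpha$, so $E > (k-1)\alpha$.

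The main obstacle will be the careful bookkeeping of the four cross interactions that turn each pair-of-pairs contribution into the single factor $2\phi(r)$, and ensuring that in the independent-set case the cumulative positive slack from the $\binom{k}{2}$ non-adjacent terms stays strictly below the single-pair surplus $|\alpha|$. This final estimate rests entirely on the geometric bound $k \leq n$, which is why every distance in the construction is normalised by $n$ and why Inequality (\ref{ineq:neg_out_cirlce}) is stated with the factor $n^2$ on its left-hand side.
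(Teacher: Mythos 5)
Your proposal is correct and takes essentially the same route as the paper's own proof: it decomposes the energy into the intra-pair contribution $k\bigl(\frac{A_{12}}{e^{B_{12}}} - C_{12} - 1\bigr)$ plus the inter-pair terms, uses Inequality (\ref{ineq:neg_out_cirlce}) (via the $n^2$ factor and $k \leq n$) together with Inequality (\ref{ineq:same_plane_pos}) to pin the independent-set case strictly below $(k-1)\bigl(\frac{A_{12}}{e^{B_{12}}} - C_{12} - 1\bigr)$, and Inequality (\ref{ineq:pos_in_cirlce}) to force the energy above that threshold when two chosen pairs are adjacent. Your accounting of the four cross-interactions as $2\phi(r)$ and the explicit estimate $k(k-1)/n^2 < 1$ is in fact more careful than the paper's version, which asserts the total non-adjacent contribution is at most the single-pair surplus without spelling out the count.
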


\begin{proof}
Given $k$ pairs, the energy between the ions in each pair is $\frac{A_{12}}{e^{B_{12}}} - C_{12} - 1$, for a total of $k(\frac{A_{12}}{e^{B_{12}}} - C_{12} - 1)$.
Inequality (\ref{ineq:neg_out_cirlce}) ensures that the maximum energy gained from pairs of ions corresponding to non-intersecting circles is at most $|\frac{A_{12}}{e^{B_{12}}} - C_{12} -1|$. Inequality (\ref{ineq:same_plane_pos}) ensures that having vertices on the same plane leads to a slight positive charge.
From this it follows that the maximum energy a set of ions corresponding to an independent set is $(k-1)(\frac{A_{12}}{e^{B_{12}}} - C_{12} - 1)$.
Conversely, from Inequality (\ref{ineq:pos_in_cirlce}) it is known that if there is a pair of intersecting circles the total energy must be greater than  $(k-1)(\frac{A_{12}}{e^{B_{12}}} - C_{12} - 1)$.

Note that for \akcr that if greater than $k'$ pairs were removed this energy could not be achieved as the minimum energy would be $(k-1)(\frac{A_{12}}{e^{B_{12}}} - C_{12} - 1)$ for the interaction within pairs.
As there is a positive interaction between pairs, the total energy must be slightly greater than this for any $k > 1$.
Therefore the total energy is less than $(k-1)(\frac{A_{12}}{e^{B_{12}}} - C_{12} - 1)$ if and only if there is an independent set of size $k$ left.
Note that under the choice of variables from Lemma \ref{lem:ineq_satisfiable}, the upper bound is $-(k-1)$.
\end{proof}

\begin{lemma}
When removing $k'$ charges from the construction from a long orthogonal penny graph, it is always preferable to remove pairs provided that Inequalities (\ref{ineq:pos_in_cirlce}-\ref{ineq:same_plane_pos}) hold.
\label{lem:pairs_pref}
\end{lemma}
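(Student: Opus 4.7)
I plan to prove this by a local exchange argument. Parametrise each vertex $v$ of the penny graph by one of four states depending on which of its paired ions remain: $F$ (both present), $P$ (positive only), $N$ (negative only), or $R$ (both removed). Charge balance forces $|V_P|=|V_N|=:s$, and the pair-only case is exactly $s=0$. The plan is to show that whenever $s>0$, one can transform the removal into one with smaller $s$ without increasing the remaining energy, and then iterate until $s=0$.

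Given $s>0$, I would pick $v\in V_P$ and $w\in V_N$ and consider two candidate swaps, each preserving the removed charge counts and decreasing $s$ by one: option~1, $(v,w)\mapsto(F,R)$, restoring $v$'s intra-pair bond, and option~2, $(v,w)\mapsto(R,F)$, restoring $w$'s. Exploiting the symmetry that a complete pair at horizontal distance $r$ interacts with any single ion with total energy $f(r)=U_{11}(r)+U_{12}(\widehat r)$ regardless of the ion's sign, each energy change $\Delta E$ decomposes into a guaranteed $-1$ from the newly restored intra-pair bond, a small term $-U_{12}(\widehat{r_{vw}})$ of magnitude $O(1/n)$ by the parameter choices in Lemma~\ref{lem:ineq_satisfiable}, and residual sums over remaining vertices $u$ of differences $f(r_{vu})-f(r_{wu})$, $U_{11}(r_{vu})-U_{11}(r_{wu})$, and $U_{12}(\widehat{r_{vu}})-U_{12}(\widehat{r_{wu}})$. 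Averaging over the two options cancels the $V_F$-sum contributions, and the remaining residuals on $V_P\cup V_N$ are controlled by Inequalities~(\ref{ineq:neg_out_cirlce}) and~(\ref{ineq:same_plane_pos}), which bound every non-adjacent evaluation by $1/n^2$; the few large adjacent evaluations are bounded using Inequality~(\ref{ineq:pos_in_cirlce}) and the bounded degree of the long orthogonal penny graph.

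The main obstacle is controlling the residual sums when $v$ and $w$ have asymmetric adjacency neighbourhoods, since individual differences can then be $\Theta(1)$ and a single swap need not yield $\Delta E<0$. I would address this by averaging over all $s^2$ choices of $(v,w)\in V_P\times V_N$, exploiting that each common neighbour $u$ then contributes symmetrically to the total residual, so the guaranteed $-1$ per intra-pair bond dominates and the best pairing attains $\Delta E<0$. For the extreme regime $s=\Theta(n)$, I would supplement with a direct bound: whenever no independent set of size $k$ exists in the underlying penny graph, the kept set $V_F\cup V_P\cup V_N$, of size $k+s\ge k$, must contain an adjacent pair, contributing at least $1$ to the cross energy by Inequality~(\ref{ineq:pos_in_cirlce}); combined with the intra-pair term $-(k-s)$ this keeps the total strictly above $g=-(k-1)$ regardless of $s$. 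Together the local exchange and the direct bound establish that a complete-pair removal is always optimal.
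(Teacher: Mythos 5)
Your overall strategy -- an exchange argument that converts any removal containing broken pairs into one with fewer broken pairs at no energy cost -- is the same one the paper uses (the paper performs a single swap $v_i^+, v_j^- \to v_i^+, v_i^-$ and bounds the change by $-1$ plus $t$ small residual terms). Your version is more carefully structured, and the observation that averaging the two swap options $(F,R)$ and $(R,F)$ exactly cancels the contributions from full pairs $u \in V_F$ is correct and genuinely useful. However, the two devices you introduce to handle the remaining residuals both have concrete flaws. First, Inequalities (\ref{ineq:neg_out_cirlce}) and (\ref{ineq:same_plane_pos}) bound only the \emph{combined} interaction $f(r) = U_{11}(r) + U_{12}(\widehat{r})$ of a single ion with a full pair by $1/n^2$; they say nothing about the individual terms. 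With the parameters of Lemma \ref{lem:ineq_satisfiable}, $U_{11}(r) \approx +1/r$ and $U_{12}(\widehat{r}) \approx 1/(2n^2) - 1/\widehat{r}$, each of magnitude $\Theta(1/n)$ at non-adjacent distances, and it is precisely these unpaired terms that survive for $u \in V_P \cup V_N$. Averaging over all $s^2$ choices of $(v,w)$ reduces the residual from $u$ to a difference of averages of the form $\mathrm{avg}_{v \in V_P}\, g(r_{vu}) - \mathrm{avg}_{w \in V_N}\, g(r_{wu})$ with $g(r) = U_{12}(\widehat{r}) - U_{11}(r) \approx -2/r$; there is no symmetry forcing this to vanish, so each $u$ still contributes up to $\Theta(1/n)$ and the total over $2s$ unpaired vertices is $\Theta(s/n)$, which is $\Theta(1)$ when $s = \Theta(n)$ and is not dominated by the guaranteed $-1$.

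Second, the fallback bound for large $s$ relies on the claim that any adjacent pair among the kept vertices contributes at least $+1$ via Inequality (\ref{ineq:pos_in_cirlce}). That inequality lower-bounds $U_{11}(n) + U_{12}(\widehat{n})$, i.e.\ the interaction of a single ion with a \emph{full} adjacent pair. If both adjacent kept vertices are unpaired singletons of opposite sign, their interaction is the single term $U_{12}(\widehat{n}) \approx 1/(2n^2) - 1/n < 0$: an attraction, not a repulsion of size $1$. More generally, the $O(s^2)$ pairwise interactions among unpaired singletons are not sign-controlled by any of the three inequalities, so the claim that the total stays above $-(k-1)$ "regardless of $s$" is not established. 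To close the argument you would need an additional estimate on the aggregate Coulomb-type energy of the unbalanced singleton set (or a parameter choice making every singleton--singleton interaction nonnegative), neither of which is supplied.
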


\begin{proof}
Assume that this statement is false, there must be some assignment, where it is preferable to remove some set of at least two vertices, $v_i^+$ and $v_j^-$, that do not form a pair with any ions that have been removed.
Assume that there are $t$ positive and $t$ negative vertices in the graph.
If instead $v_j^-$ was left in, while $v_i^+$ was removed, the remaining energy would change by at least $-1 + t\left(\frac{A_{11}}{e^{B_{11} \sqrt{2}n}} - \frac{C_{11}}{(\sqrt{2}n)^6} + \frac{1}{\sqrt{2}n} + \frac{A_{12}}{e^{B_{12} \widehat{\sqrt{2} n}}} - \frac{C_{12}}{\widehat{\sqrt{2} n}^6} - \frac{1}{\widehat{\sqrt{2}n}}\right)$.
From the arguments in Lemma \ref{lem:ind_set_ineq} and the construction in Lemma \ref{lem:ineq_satisfiable} this leads to a decrease in total energy, making it preferable and therefore contradicting the assumption.
Note that given a positively charged vertex of the maximum degree, in this case 3, it could contribute at most $\frac{3}{2n^2} - \frac{3}{2n^8} - \frac{3}{n}$ which has a magnitude less than 1 for any $n \geq 3$.
Therefore, by contradiction this holds.
\end{proof}

\begin{theorem}
\kcr, \minkcr and \akcr are NP-Complete when limited to only two species of ion and restricted to the Buckingham-Coulomb potential energy function.
\label{thm:kcr_2_np}
\end{theorem}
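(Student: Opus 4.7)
The plan is to assemble the groundwork laid out in the preceding definitions and lemmas into a formal reduction from \textsc{independent-set} on long orthogonal penny graphs. Since Cerioli et al.\ give us NP-Hardness for \textsc{independent-set} on this class (via their Theorem 1.2, starting from planar vertex cover of maximum degree 3, which is NP-Complete by Garey and Johnson), it suffices to show that an instance $(G,k)$ of this problem is a yes-instance if and only if the constructed instance $(G',k')$ of \kcr with goal $g=(k-1)\!\left(\tfrac{A_{12}}{e^{B_{12}}}-C_{12}-1\right)$ is a yes-instance, where $k'=n-k$.

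First I would dispose of membership: Proposition \ref{prop:kcr_in_np} already places \kcr and \akcr in NP, and since charges here are restricted to $\pm 1$ they are trivially polynomially bounded, so Corollary \ref{col:minkcr_in_np} puts \minkcr in NP as well. Moreover, with unit charges any minimal removal of at least $k'$ charges must remove exactly $k'$ positive and $k'$ negative ions, so the three problems coincide on our instances --- this observation lets me fold \minkcr and \akcr into the same argument at the end rather than duplicating work.

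Next I would invoke Lemma \ref{lem:ineq_satisfiable} to fix the six force-field parameters $A_{11},B_{11},C_{11},A_{12},B_{12},C_{12}$ so that inequalities (\ref{ineq:pos_in_cirlce}), (\ref{ineq:neg_out_cirlce}) and (\ref{ineq:same_plane_pos}) all hold for our specific long orthogonal penny realisation. These parameters depend only on $n$, and the two-species restriction is honoured because exactly one positive and one negative specie are used. Lemma \ref{lem:pairs_pref} then guarantees that among all valid $k'$-charge removals the minimum-energy choice consists entirely of removing whole pairs, so the ions left behind are indexed by exactly $k$ vertices of $G$.

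The correspondence itself is then the content of Lemma \ref{lem:ind_set_energy_bounds}: the energy of a configuration induced by $k$ pairs drops below the goal $g$ exactly when those pairs correspond to an independent set of $G$, because inequality (\ref{ineq:pos_in_cirlce}) makes any tangent (adjacent) pair of circles too costly while (\ref{ineq:neg_out_cirlce}) keeps the cross-terms between non-adjacent pairs small enough in total. Thus a yes-instance of \textsc{independent-set} on $G$ yields a $k'$-charge removal achieving energy at most $g$, and conversely any removal achieving energy at most $g$ must leave $k$ pairs whose labels form an independent set. For \akcr I would note, using the parenthetical observation in Lemma \ref{lem:ind_set_energy_bounds}, that removing strictly more than $k'$ pairs cannot beat the goal because fewer than $k$ intra-pair contributions of $\tfrac{A_{12}}{e^{B_{12}}}-C_{12}-1$ would remain and the residual same-plane energy is non-negative by (\ref{ineq:same_plane_pos}); hence an \akcr-solution also forces exactly $k$ surviving pairs, reducing to the \kcr case. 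The main obstacle in this plan is not the combinatorial correspondence but the simultaneous satisfiability of the three inequalities by a \emph{single} parameter choice in a fixed Euclidean embedding where distances are dictated by the penny realisation; that burden, however, has already been discharged in Lemma \ref{lem:ineq_satisfiable}, so the theorem itself is the assembly of these ingredients together with the NP-membership observations.
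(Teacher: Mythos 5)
Your proposal is correct and follows essentially the same route as the paper's own proof: both assemble Lemmas \ref{lem:ind_set_ineq}, \ref{lem:ineq_satisfiable}, \ref{lem:ind_set_energy_bounds} and \ref{lem:pairs_pref} into the reduction from \textsc{independent-set} on long orthogonal penny graphs, with NP-membership supplied by Proposition \ref{prop:kcr_in_np} and Corollary \ref{col:minkcr_in_np}. Your treatment is in fact slightly more explicit than the paper's about membership and about why the three problems coincide under unit charges, but the underlying argument is the same.
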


\begin{proof}
Building on the results from Lemmas \ref{lem:ind_set_ineq}, \ref{lem:ineq_satisfiable}, \ref{lem:ind_set_energy_bounds}, and \ref{lem:pairs_pref}, the next step is to show NP-Completeness.
Lemma \ref{lem:ind_set_ineq} shows that, provided Inequalities  (\ref{ineq:pos_in_cirlce}) and (\ref{ineq:neg_out_cirlce}) hold, the optimal solution is to leave an independent set.
From Lemma \ref{lem:ineq_satisfiable} it follows that these inequalities are satisfiable for any graph under the given construction, noting that the assignment of parameters gives an energy of $-1$ within pairs.
Lemma \ref{lem:ind_set_energy_bounds} shows that the upper bound is reachable if and only if an independent set has been left.
It follows from Lemma \ref{lem:pairs_pref} that the assumption that it is preferable to remove a set of pairs over any other set of charges holds when the inequalities also do.

Therefore there is a satisfiable instance of \kcr or any generalisation if and only if the instance of \textsc{independent set} for the maximum degree 3 planar graph instance is satisfiable.
Conversely if the \textsc{independent set} instance is satisfiable, the corresponding \kcr instance is satisfied by leaving the vertices corresponding to the independent set in the long orthogonal penny graph construction.
Hence under these restriction all three problems are NP-complete.
Note that this may be extended to vertices with charges $\pm c$ for any given $c$.
\end{proof}

\begin{corollary}
It is NP-Hard to approximate the optimal solution to \kcr, \minkcr, or \akcr within a factor $1 + \frac{3}{n - k - 1}$ for the Buckingham-Coulomb potential.
\end{corollary}\vspace*{-1mm}

\begin{figure}[!b]
\vspace*{-2mm}
    \centering
    \includegraphics[scale=0.92]{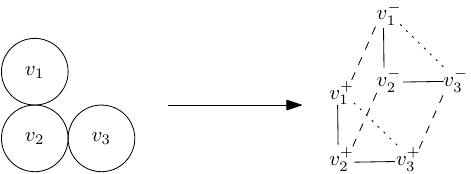}\vspace*{-2mm}
    \caption{Example of the construction of an arrangement from a penny graph, in this example $v_1$ and $v_2$ are adjacent, as are $v_2$ and $v_3$, but not $v_1$ and $v_3$} \label{fig:ind_set_ex}\vspace*{-1mm}
\end{figure}

\begin{proof}
Following Lemma \ref{lem:ind_set_energy_bounds}, given a set of $k$ ion pairs corresponding to an independent set the total energy is $(k-1)(\frac{A_{12}}{e^{B_{12}}} - C_{12} - 1)$.
In terms of the \kcr problem this can be rewritten as $(n - k - 1)(\frac{A_{12}}{e^{B_{12}}} - C_{12} - 1)$.
In the case that there exists some pair of ion-pairs representing adjacent vertices, there must be an energy penalty of $4\left(\frac{A_{11}}{e^{B_{11} n}} - \frac{C_{11}}{n^6} + \frac{1}{n} + \frac{A_{12}}{e^{B_{12} \widehat{n}}} - \frac{C_{12}}{\widehat{n}^6} - \frac{1}{\widehat{n}}\right)$.
Therefore, a lower bound on the energy in the case that there is at least some pair of ions adjacent to each other is $(n - k)(\frac{A_{12}}{e^{B_{12}}} - C_{12} - 1) + 4\left(\frac{A_{11}}{e^{B_{11} n}} - \frac{C_{11}}{n^6} + \frac{1}{n} + \frac{A_{12}}{e^{B_{12} \widehat{n}}} - \frac{C_{12}}{\widehat{n}^6} - \frac{1}{\widehat{n}}\right) \geq (n - k - 1)(\frac{A_{12}}{e^{B_{12}}} - C_{12} - 1) + 4\left(\frac{A_{11}}{e^{B_{11} n}} - \frac{C_{11}}{n^6} + \frac{1}{n} + \frac{A_{12}}{e^{B_{12} \widehat{n}}} - \frac{C_{12}}{\widehat{n}^6} - \frac{1}{\widehat{n}}\right)$.
Therefore any approximation algorithm that can achieve an approximation ratio smaller than $\frac{(n - k - 1)(\frac{A_{12}}{e^{B_{12}}} - C_{12} - 1) + 3\left(\frac{A_{11}}{e^{B_{11} n}} - \frac{C_{11}}{n^6} + \frac{1}{n} + \frac{A_{12}}{e^{B_{12} \widehat{n}}} - \frac{C_{12}}{\widehat{n}^6} - \frac{1}{\widehat{n}}\right)}{(n - k - 1)(\frac{A_{12}}{e^{B_{12}}} - C_{12} - 1)} = 1 + \frac{3\left(\frac{A_{11}}{e^{B_{11} n}} - \frac{C_{11}}{n^6} + \frac{1}{n} + \frac{A_{12}}{e^{B_{12} \widehat{n}}} - \frac{C_{12}}{\widehat{n}^6} - \frac{1}{\widehat{n}}\right)}{(n - k - 1)(\frac{A_{12}}{e^{B_{12}}} - C_{12} - 1)} \geq 1 + \frac{3}{n - k - 1}$ would be able to find the optimal solution to the underlying $k$-independent set problem in polynomial time unless $P = NP$.
\end{proof}

\begin{figure}[!h]
    \centering
    \includegraphics[scale=1]{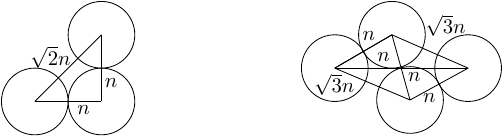}
    \caption{Illustration of the distances between the centre of non-adjacent pennies using the construction of Cerioli et al. \cite{ITA_2011__45_3_331_0}.}
    \label{fig:penny_distances} \vspace*{-3mm}
\end{figure}

\section{Restriction to the Coulomb potential with unbounded charges}
\label{sec:kcr_kanpsack}

\noindent
The final case that is considered in this work is when the energy function is the Coulomb potential.
NP-Hardness for this case is shown by a reduction from \textsc{knapsack} to \akcr.
Note that with an unbounded number charge values this problem is not in NP for \minkcr due to Proposition \ref{prop:kcr_not_np} and is trivially NP-Hard for \kcr due to Corollary \ref{col:np_for_sol_kcr}.
This reduction requires using an unbounded number of charge values, thus it follows from proposition \ref{prop:kcr_not_np} that it is NP-Hard to verify if a solution to an instance of \akcr is minimal.
In this reduction it is shown that an instance of \akcr such that the set of ions left correspond to the items for the knapsack instance if and only if there is a set satisfying the knapsack instance.

\begin{theorem}
\label{thm:kcr_coulomb_knapsack}
\akcr and \minkcr remains NP-Hard when the energy function is limited to the Coulomb potential.
\end{theorem}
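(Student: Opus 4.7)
The plan is to reduce \textsc{knapsack} to \akcr under the Coulomb potential. Given a knapsack instance with positive integer weights $w_1,\dots,w_n$, positive integer values $v_1,\dots,v_n$, capacity $W$, and target value $V$, I would construct a crystal graph consisting of $n$ pairs. Pair $i$ consists of a positive ion $p_i$ of charge $+w_i$ and a negative ion $n_i$ of charge $-w_i$, placed at Euclidean distance $r_i = w_i^2/v_i$, so that the intra-pair Coulomb interaction equals $-v_i$. The $n$ pairs are arranged with parallel orientations at mutual centre-to-centre separation $D$, where $D$ is a polynomial in $n$ and $W_{\max}:=\max_i w_i$ chosen so that the total contribution of all inter-pair and ``dangling-ion'' Coulomb terms has absolute value at most $1/4$. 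I set the removal threshold $k = \sum_i w_i - W$, so that removing at least $k$ units of positive charge is equivalent to leaving ions of total positive charge at most $W$, and pose the decision instance with target energy $T = -V + 1/2$.

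The correctness of the reduction rests on two claims. First, every optimal removal can be taken to be a union of complete pairs. If the removed set contained $p_i$ without $n_i$, then neutrality forces some other pair $j$ to also be broken (in the simplest case by removing $n_j$ with $w_i=w_j$), leaving both pairs $i$ and $j$ as singletons separated by $\Theta(D)$. Replacing this broken removal with the complete pairwise removal of pair $i$ alone removes the same amount of positive charge $w_i$ while keeping pair $j$ intact, thereby recovering the contribution $-v_j$ in the residual energy; since inter-pair and dangling-ion interactions contribute at most $1/4$ in magnitude and $v_j\ge 1$ is an integer, the pairwise removal is strictly better. Iterating the exchange, one may assume the removed set is a union of complete pairs indexed by $S\subseteq\{1,\dots,n\}$, giving removed positive charge $\sum_{i\in S}w_i$ and residual energy $-\sum_{i\notin S}v_i + E_{\mathrm{inter}}$ with $|E_{\mathrm{inter}}|<1/4$.

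Second, this encoding matches the knapsack exactly. The constraint $\sum_{i\in S}w_i\ge k$ is equivalent to $\sum_{i\notin S}w_i\le W$, and because the values are integers with $|E_{\mathrm{inter}}|<1/4$, the inequality $-\sum_{i\notin S}v_i + E_{\mathrm{inter}}\le -V+1/2$ is equivalent to $\sum_{i\notin S}v_i\ge V$. Hence the \akcr instance answers yes iff the items indexed by $\{1,\dots,n\}\setminus S$ form a feasible solution to the knapsack instance. NP-Hardness of \minkcr under the Coulomb potential follows directly from Proposition~\ref{prop:kcr_not_np}, whose subset-sum construction is agnostic to the choice of energy function as long as it is polynomial-time computable, which the Coulomb potential clearly is.

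The main obstacle is making the dipole-style cancellation estimate $|E_{\mathrm{inter}}|<1/4$ rigorous with $D$ polynomially bounded in the input. Aligning each pair along a common axis makes the four Coulomb terms between pairs $i$ and $j$ telescope into $w_iw_j\bigl[D^{-1} - (D^2+r_j^2)^{-1/2} - (D^2+r_i^2)^{-1/2} + (D^2+(r_i-r_j)^2)^{-1/2}\bigr]$, which a second-order Taylor expansion in $1/D$ shows is $O(w_iw_j r_i r_j/D^3)$. Summed over the $O(n^2)$ pairs and using $r_i\le w_i^2\le W_{\max}^2$, the choice $D = \Theta(n^{2/3}W_{\max}^2)$ suffices to bound the total by $1/4$; the analogous monopole-dipole bound $O(W_{\max}^4/D^2)$ controls any dangling ion interacting with an intact pair and underpins the exchange argument above.
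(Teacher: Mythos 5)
Your reduction is essentially the one the paper uses: one ion pair per item with charges $\pm w_i$ at intra-pair distance $w_i^2/v_i$ so that the pair contributes $-v_i$, all pairs placed far apart so cross terms are negligible, $k=\sum_i w_i - W$, and an energy target of $-V$ plus a small slack. In two places you are in fact more careful than the paper: you give an explicit exchange argument showing that an optimal removal may be assumed to be a union of complete pairs (the paper merely asserts that ``the only way to achieve this is to choose a set of vertex pairs''), and you exploit dipole cancellation to get an explicit separation $D=\Theta(n^{2/3}W_{\max}^2)$, where the paper uses the cruder monopole bound $4n^2w_{\max}^2/\alpha$ against an unspecified ``unit of precision'' $u$. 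Both are fine; note only that $D$ (or $\alpha$) need merely have polynomially many \emph{bits}, since knapsack weights are given in binary, so pseudo-polynomial magnitudes are harmless here.

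The one genuine flaw is your final claim about \minkcr. Proposition~\ref{prop:kcr_not_np} shows that \emph{verifying minimality of a given set} is NP-Hard; that does not imply that the \minkcr decision problem (``is there a minimal at-least-$k$-charge removal with residual energy at most $g$?'') is NP-Hard --- a problem can have a hard certificate-checking step and an easy decision version, and indeed every instance admitting any at-least-$k$-charge removal admits a minimal one, so existence alone is trivial. The correct route, and the one the paper takes, stays inside your own construction: any satisfying \akcr removal contains a minimal at-least-$k$-charge removal, and shrinking the removed set only restores intact pairs (each contributing $-v_i<0$) or dangling ions (contributing at most your $1/4$ of inter-pair energy), so by integrality of the values the residual energy of the minimal sub-removal still meets the target. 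Hence the same knapsack reduction certifies NP-Hardness of \minkcr directly, with no appeal to Proposition~\ref{prop:kcr_not_np}.
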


\begin{proof}
In the knapsack problem, henceforth \textsc{knapsack}, the input is a bag with capacity $C$, and a set of items $S$.
Each item $i \in S$ has a weight $w_i$, and a value $p_i$.
In this problem the goal is to find the subset $S' \subseteq S$ such that $\sum\limits_{i \in S'} p_i$ is maximised conditional on $\sum\limits_{i \in S'} w_i \leq C$.
Alternatively this may phrased as a decision problem by taking some goal value $g$ and asking if there is an $S'$ such that $\sum\limits_{i \in S'} p_i \geq g$.

\medskip
NP-completeness for \akcr and \minkcr is shown by a reduction from the knapsack problem.
Given an instance, $I$, of the knapsack problem as described above, an instance, $I'$, of \akcr is created as follows.
For every $i \in S$, two charges are created denoted $v_i^+$ and $v_i^-$ and label with the corresponding item.
These are assigned a weight of $w_i$ to $v_i^+$ and $-w_i$ to $v_i^-$.

\medskip
The values $u$ and $\alpha$ are defined such that $u$ is some value such that there does not exist any pair of items, $i$ and $j$, such that $p_i > p_j$ but $p_j + u \geq p_i$, and $u$ is less than the smallest \emph{unit of precision} for the value of the items.
Using this, $\alpha$ is defined as some value satisfying the inequality $u > \frac{4n^2w^2_{max}}{\alpha}$, where $w_{max}$ is the weight of the heaviest item.
This ensures that $\alpha$ is some distance such that if all vertices are at least $\alpha$ away from each other there is a difference of no more than $u$ in energy, which is sufficient to ensure that vertices at that distance may be safely ignored.

These vertices are now placed such that, for each item, the distance between the two vertices $v_i^+$ and $v_i^-$ has a potential energy of $-p_i$.
Recall that $U^C_{ij} = \frac{q_i q_j}{r_{ij}}$. This is achieved by placing them at a distance of $\frac{w^2_{i}}{p_i}$.
Each of the pair of vertices representing an item is placed in a line so that the distance between any two pairs is no less than $\alpha$.
An example of this construction is provided in Figure \ref{fig:knapsack_example}.

\begin{figure}[!h]
    \centering
    \begin{tabular}{l l l}
        \emph{Item} & \emph{Weight} & \emph{Value}\\
        $I_1$ & 9 & 3 \\
        $I_2$ & 6 & 2 \\
        $I_3$ & 3 & 3
    \end{tabular}
    \includegraphics[scale=0.74]{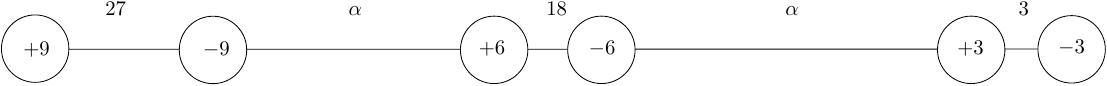}
    \caption{Example of construction of the structure from the knapsack instance. In this $u < 1$ and $\alpha > 2916$}
    \label{fig:knapsack_example}
\end{figure}

The value $k$ is chosen as $k = \left(\sum\limits_{i \in S} w_i \right) - c$, ensuring that there are no more than $c$ vertices left after removing $k$, corresponding to a valid assignment for the knapsack instance.
Finally, the goal value is chosen as $g' = -g + u$.

It follows from this construction that any removal of $k_{\geq}$ charges are a valid packing in terms of the capacity.

\medskip If the \akcr instance is satisfiable then there must be some valid packing of no more than $g'$ energy.
As the interaction between vertices corresponding to different items is trivially small, the only way to achieve this is to choose a set of vertex pairs with an energy between them no more than $g'$.
As the energy between pairs is equal to the value of the items, the only way this is achieved is to have items corresponding to a packing with value at least $g$.
Conversely if the \akcr instance is not satisfiable, there does not exist a packing of value $g$ by the same arguments.

\medskip
Similarly if the \textsc{knapsack} instance is satisfiable then the \akcr instance may be satisfied by removing all charges not corresponding to a satisfying packing of the \textsc{knapsack} instance.
Finally if the \textsc{knapsack} instance is not satisfiable then by the previous arguments the \textsc{kcr} instance also can not be satisfied.
Therefore this problem is NP-Complete.
Note that as the weights on all items is positive, with a corresponding negative energy in $I'$, given a non-minimal satisfying solution there exists some minimal satisfying solution.
Therefore \minkcr is also NP-Hard.
\end{proof}


\section{Conclusions and future work}
In this work we have presented the new problem of \kcr, and a class of functions for which the general case is NP-Complete.
In the general case, we show that this problem is APX-Hard.
We have also shown that the problem remains NP-Complete under both the restriction that we have only two species of ions and the Buckingham-Coulomb energy function and the restriction we only use the Coulomb potential on an unbounded number of ion species.

These results provide a strong indication on the general hardness of \textsc{csp}.
By constructing an instance of \kcr, where every position is simultaneously occupied by every possible ion, the optimal solution to \kcr for some $k$ is also the optimal solution to \textsc{csp}.
While this paper does not look at the this specific case, the hardness of \kcr both in general, and under the restriction to 2 ion species, suggests that \textsc{csp} is also NP-Hard.
These results do show that \textsc{csp} when restricted to the constructions used in the paper is NP-Hard.
More specifically, \textsc{csp} is NP-hard for a discrete grid, where each position has some restriction on the allowed ion species.

From a chemistry stand point, while we have made progress towards physical constructions there is still a lot that could be done in this regard.
As such investigation into the restrictions of having more realistic physical values remains an important unexplored direction.
Another question would be if we can investigate the convergence of these interactions, particularly the Coulomb potential, over a periodic structure to more fully understand the energy function.

\subsection*{Acknowledgements}
\noindent
The authors thank the Leverhulme Trust for funding this research via the Leverhulme Research Centre for Functional Materials Design.
This work was partialy supported by EPSRC grants EP/R018472/1 ``Application driven Topological Data Analysis'' and  EP/R005613/1 ``ALGOUK - A Network for Algorithms and Complexity in the UK''.  The research of Prof. Potapov has been partially supported by Leverhulme Trust Senior Research Fellowship (SRF\textbackslash R1\textbackslash 201074).


\begin{thebibliography}{10}
\providecommand{\url}[1]{\texttt{#1}}
\providecommand{\urlprefix}{URL }
\expandafter\ifx\csname urlstyle\endcsname\relax
  \providecommand{\doi}[1]{doi:\discretionary{}{}{}#1}\else
  \providecommand{\doi}{doi:\discretionary{}{}{}\begingroup
  \urlstyle{rm}\Url}\fi
\providecommand{\eprint}[2][]{\url{#2}}

\bibitem{Adamson2020}
Adamson D, Deligkas A, Gusev VV, Potapov I.
\newblock {On the Hardness of Energy Minimisation for Crystal Structure  Prediction}.
\newblock In: Lecture Notes in Computer Science), volume 12011 LNCS. Springer,
  2020 pp. 587--596.  	arXiv: 1910.12026 [cs.CC].

\bibitem{LYAKHOV20101623}
Lyakhov AO, Oganov AR, Valle M.
\newblock How to predict very large and complex crystal structures.
\newblock \emph{Computer Physics Communications}, 2010.
\newblock \textbf{181}(9):1623--1632.  doi:10.1016/j.cpc.2010.06.007.

\bibitem{woodley2008crystal}
Woodley SM, Catlow R.
\newblock Crystal structure prediction from first principles.
\newblock \emph{Nature materials}, 2008.
\newblock \textbf{7}(12):937--946.   doi:10.1038/nmat2321.

\bibitem{csp-local}
Antypov D, Deligkas A, Gusev V, Rosseinsky MJ, Spirakis PG, Theofilatos M.
\newblock Crystal Structure Prediction via Oblivious Local Search.
\newblock \emph{CoRR}, 2020.
\newblock \textbf{abs/2003.12442}.

\bibitem{doi:10.1002/1521-3765(20020916)8:18<4102::AID-CHEM4102>3.0.CO;2-3}
Mellot-Draznieks C, Girard S, Férey G, Schön JC, Cancarevic Z, Jansen M.
\newblock Computational Design and Prediction of Interesting
  Not-Yet-Synthesized Structures of Inorganic Materials by Using Building Unit
  Concepts.
\newblock \emph{Chemistry – A European Journal}, 2002.
\newblock \textbf{8}(18):4102--4113. doi:10.1002/1521-3765(20020916) 8:18<4102.

\bibitem{oganov2006crystal}
Oganov AR, Glass CW.
\newblock Crystal structure prediction using ab initio evolutionary techniques:
  Principles and applications.
\newblock \emph{The Journal of chemical physics}, 2006.
\newblock \textbf{124}(24). doi:10.1063/1.2210932.

\bibitem{WANG20122063}
Wang Y, Lv J, Zhu L, Lu S, Yin K, Li Q, Wang H, Zhang L, ma Y.
\newblock Materials discovery via CALYPSO methodology.
\newblock \emph{Journal of physics. Condensed matter : an Institute of Physics
  journal}, 2015.
\newblock \textbf{27}:203203.  doi:10.1088/0953-8984/27/20/203203.

\bibitem{Oganov2018}
Oganov AR.
\newblock Crystal structure prediction: reflections on present status and  challenges.
\newblock \emph{Faraday Discuss.}, 2018.
\newblock \textbf{211}:643--660. doi:10.1039/C8FD90033G.

\bibitem{Barahona_1982}
Barahona F.
\newblock On the computational complexity of Ising spin glass models.
\newblock \emph{Journal of Physics A: Mathematical and General}, 1982.
\newblock \textbf{15}(10):3241--3253.  doi:10.1088/0305-4470/15/10/028.

\bibitem{Wille_1985}
Wille LT, Vennik J.
\newblock Computational complexity of the ground-state determination of atomic  clusters.
\newblock \emph{Journal of Physics A: Mathematical and General}, 1985.
\newblock \textbf{18}(8):L419--L422.  doi:10.1088/0305-4470/18/8/003.

\bibitem{doi:10.1137/0208049}
Krishnamoorthy M, Deo N.
\newblock Node-Deletion NP-Complete Problems.
\newblock \emph{SIAM Journal on Computing}, 1979.
\newblock \textbf{8}(4):619--625.

\bibitem{doi:10.1137/0210022}
Yannakakis M.
\newblock Node-Deletion Problems on Bipartite Graphs.
\newblock \emph{SIAM Journal on Computing}, 1981.
\newblock \textbf{10}(2):310--327.  doi:10.1137/0210022.

\bibitem{Yannakakis78node-and}
Yannakakis M.
\newblock Node-and edge-deletion NP-complete problems.
\newblock In: Conference record of the tenth annual ACM Symposium on Theory or
  Computing (STOC), ACM. 1978 pp. 253--264.

\bibitem{Faber2015}
Faber F, Lindmaa A, von Lilienfeld OA, Armiento R.
\newblock {Crystal structure representations for machine learning models of
  formation energies}.
\newblock \emph{International Journal of Quantum Chemistry}, 2015.
\newblock \textbf{115}(16):1094--1101.
\newblock \doi{10.1002/qua.24917}.

\bibitem{collins2017accelerated}
Collins C, Dyer M, Pitcher M, Whitehead G, Zanella M, Mandal P, Claridge J,
  Darling G, Rosseinsky M.
\newblock Accelerated discovery of two crystal structure types in a complex
  inorganic phase field.
\newblock \emph{Nature}, 2017.
\newblock \textbf{546}(7657):280. doi:10.1038/nature22374.

\bibitem{ITA_2011__45_3_331_0}
Cerioli MR, Faria L, Ferreira TO, Protti F.
\newblock A note on maximum independent sets and minimum clique partitions in
  unit disk graphs and penny graphs: complexity and approximation.
\newblock \emph{RAIRO - Theoretical Informatics and Applications - Informatique
  Th\'eorique et Applications}, 2011.
\newblock \textbf{45}(3):331--346.   doi:10.1051/ita/2011106.

\bibitem{CLARK1990165}
Clark BN, Colbourn CJ, Johnson DS.
\newblock Unit disk graphs.
\newblock \emph{Discrete Mathematics}, 1990.
\newblock \textbf{86}(1):165--177.  doi:10.1016/0012-365X(90)90358-O.

\bibitem{Altschuler2020}
Altschuler JM, Boix-Adsera E.
\newblock {Hardness results for Multimarginal Optimal Transport problems}.
\newblock 2020.
\newblock \eprint{2012.05398}.

\bibitem{DO_ALTSCHULER2021}
Altschuler JM, Boix-Adserà E.
\newblock Hardness results for Multimarginal Optimal Transport problems.
\newblock \emph{Discrete Optimization}, 2021.
\newblock \textbf{42}:100669.
\newblock \doi{https://doi.org/10.1016/j.disopt.2021.100669}.

\bibitem{buckingham1938classical}
Buckingham RA.
\newblock The classical equation of state of gaseous helium, neon and argon.
\newblock \emph{Proceedings of the Royal Society of London. Series A.
  Mathematical and Physical Sciences}, 1938.

\bibitem{pisinger1999linear}
Pisinger D.
\newblock Linear time algorithms for knapsack problems with bounded weights.
\newblock \emph{Journal of Algorithms}, 1999.
\newblock \textbf{33}(1):1--14.   doi:10.1006/jagm.1999.1034.

\bibitem{axiotis2018capacitated}
Axiotis K, Tzamos C.
\newblock Capacitated Dynamic Programming: Faster Knapsack and Graph
  Algorithms, 2018.
 \eprint{1802.06440}.

\bibitem{Hastad1999}
H{\aa}stad J.
\newblock {Clique is hard to approximate within n-e}.
\newblock \emph{Acta Mathematica}, 1999.
\newblock \textbf{182}(1):105--142.  doi:10.1007/BF02392825.

\bibitem{doi:10.1137/0132071}
Garey M, Johnson D.
\newblock The Rectilinear Steiner Tree Problem is \$NP\$-Complete.
\newblock \emph{SIAM Journal on Applied Mathematics}, 1977.
\newblock \textbf{32}(4):826--834. doi:10.1137/0132071.
\end{thebibliography}
\end{document}